\newcommand{\head} {\text{head}}
\newcommand{\tail} {\text{tail}}
\newcommand{\rev} {\text{rev}}
\newcommand{\TOPOLINO} {\texttt{NCSPsupergraph}\xspace}
\newcommand{\MINNIE} {\texttt{NCSPunion}\xspace}
\newcommand{\w}{w}
\newcommand{\esterna}{f^{\infty}}
\newcommand{\uuu}{\overline}
\newcommand{\dd}{\overrightarrow}
\begin{document}
\pagestyle{plain}

\title{Non-Crossing Shortest Paths in Undirected Unweighted Planar Graphs in Linear Time}

\author{Lorenzo Balzotti\inst{1}\orcidID{0000-0001-6191-9801} \and
Paolo G. Franciosa\inst{2}\orcidID{0000-0002-5464-4069} 
\authorrunning{L. Balzotti and P. G. Franciosa}
\institute{Dipartimento di Scienze di Base e Applicate per l’Ingegneria, Sapienza Universit\`a di Roma, Via Antonio Scarpa, 16, 00161 Roma, Italy 
\email{lorenzo.balzotti@sbai.uniroma1.it}
\and
Dipartimento di Scienze Statistiche, Sapienza Universit\`a di Roma, p.le Aldo Moro 5, 00185 Roma, Italy  \email{paolo.franciosa@uniroma1.it}
}
}


\date{}
\maketitle
\begin{abstract} 
Given a set of well-formed terminal pairs on the external face of an undirected planar graph with unit edge weights, we give a linear-time algorithm for computing the union of non-crossing shortest paths joining each terminal pair, where well-formed means that such a set of non-crossing paths exists. This allows us to compute distances between each terminal pair, within the same time bound.

We also give a novel concept of \emph{incremental shortest path} subgraph of a planar graph, i.e., a partition of the planar embedding in subregions that preserve distances, that can be of interest itself. 
\end{abstract}

\noindent
\textbf{Keywords:} planar graphs, non-crossing paths, shortest paths, undirected unweighted graphs, multiple pairs, external face

\section{Introduction}
\label{section:introduction}
The problem of computing shortest paths in planar graphs arises in application fields such as intelligent transportation system (ITS) and geographic information system (GIS)~\cite{jing-huang,ziliaskopoulos}, route planning~\cite{bauer-delling,goldberg,raney-nagel}, logistic~\cite{masucci-stanilov}, traffic simulations~\cite{baker-gokhale} and robotics~\cite{kim-maxemchuk}. In particular, non-crossing paths in a planar graph are studied to optimize VLSI layout~\cite{bhatt-leighton}, where two \emph{non-crossing} paths may share edges and vertices, but they do not cross each other in the plane.

We are given a planar graph $G=(V,E)$, where $V$ is a set of $n$ vertices and $E$ is a set of edges, with $|E|=O(n)$. The graph has a fixed embedding, and we are also given a set of $k$ terminal pairs $(s_{1},t_{1}),  (s_{2},t_{2}), \ldots, (s_{k},t_{k})$ lying on the external face of $G$.
The non-crossing shortest paths problem (NCSP problem) consists in computing the union of $k$ non-crossing shortest paths in $G$, each joining a terminal pair $(s_{i},t_{i})$, provided that such non-crossing paths exist.

\paragraph{State of the art}
Takahashi \emph{et al.}~\cite{giappo2} solved the NCSP problem in a non-negative edge-weighted planar graph in $O(n\log k)$ worst-case time (actually, in their paper the worst-case time is $O(n\log n)$, that can easily reduced to $O(n\log k)$ by applying the planar single source shortest path algorithm by Henzinger \emph{et al.}~\cite{henzinger}). Their result is improved by Steiger in $O(n\log\log k)$ worst-case time~\cite{steiger}, exploiting the algorithm by Italiano \emph{et al.}~\cite{italiano}. These two algorithms maintain the same time complexity also in the unweighted case.

\paragraph{Our results} In this paper, we solve the NCSP problem on unweighted planar graphs in $O(n)$ worst-case time. We improve, in the unweighted case, the results in~\cite{steiger,giappo2}. By applying the technique in~\cite{err_giappo} we can compute distances between all terminal pairs in linear time.

Our algorithm relies on two main results:

\begin{itemize}
\item an algorithm due to Eisenstat and Klein~\cite{klein1}, that gives in $O(n)$ worst-case time an implicit representation of a sequence of shortest-path trees in an undirected unweighted planar graph $G$, where each tree is rooted in a vertex of the external face of $G$.
Note that, if we want to compute shortest paths from the implicit representation of shortest path trees given in~\cite{klein1},  then we spend $\Theta(kn)$ worst-case time; this happens when all $k$ shortest paths share a subpath of $\Theta(n)$ edges.

\item the novel concept of \emph{incremental shortest paths (ISP) subgraph}  of a graph $G$, introduced in Section~\ref{sec:ISP}, that is a subgraph incrementally built by adding a sequence of shortest paths in $G$ starting from the infinite face of $G$. We show that an ISP subgraph of $G$ partitions the embedding of $G$ into \emph{distance preserving} regions, i.e., for any two vertices $a,b$ in $G$ lying in the same region $R$ it is always possible to find a shortest path in $G$ joining $a$ and $b$ that is contained in $R$.
\end{itemize}

\paragraph{Related work}Our article fits into a wider context of computing many distances in planar graphs. In the positive weighted case, the all pairs shortest paths (APSP) problem is solved by Frederickson in $O(n^2)$ worst-case time~\cite{frederickson}, while the single source shortest paths (SSSP) problem is solved in linear time by Henzinger \emph{et al.}~\cite{henzinger}.
The best known algorithm for computing many distances in planar graphs is due to Gawrychowski \emph{et al.}~\cite{gawrychowski-mozes} and it allows us to compute the distance between any two vertices in $O(\log n)$ worst-case time after a preprocessing requiring $O(n^{3/2})$ worst-case time. In the planar unweighted case, SSSP trees rooted at vertices in the external face can be computed in linear time as in~\cite{klein1}. More results on many distances problem can be found in~\cite{cabello,chen-xu,djidjev,fakcharoenphol-rao,mozes-sommer,nussbaum}.

If we are interested in distances from any vertex in the external face to any other vertex, then we can use Klein's algorithm~\cite{klein2005} that, with a preprocessing of $O(n\log n)$ worst-case time, answers to each distance query in $O(\log n)$ worst-case time.

Kowalik and Kurowski~\cite{kowalik-kurowski} deal with the problem of deciding whether any two query vertices of an unweighted planar graph are closer than a fixed constant $k$. After a preprocessing of $O(n)$ worst-case time, their algorithm answers in $O(1)$ worst-case time, and, if so, a shortest path between them is returned.

Non-crossing shortest paths are also used to compute max-flow in undirected planar graphs~\cite{hassin,hassin-johnson,reif}. In particular, they are used to compute the vitality of edges and vertices with respect to the max-flow~\cite{ausiello-franciosa_2,ausiello-franciosa_1}.

Balzotti and Franciosa~\cite{err_giappo} show that, given the union of a set of shortest non-crossing paths in a planar graph, the lengths of each shortest path can be computed in linear time. This improves the result of~\cite{giappo2}, that can only be applied when the union of the shortest paths is a forest. 

Wagner and Weihe~\cite{wagner-weihe} present an $O(n)$ worst-case time algorithm for finding edge-disjoint (not necessarily shortest) paths in a undirected planar graph such that each path connects two specified vertices on the infinite face of the graph.

\paragraph{Improved results}We specialize the problem of finding $k$ shortest non-crossing path in~\cite{giappo2} to the unweighted case, decreasing the worst-case time complexity from $O(n\log k)$ to $O(n)$ (for every $k$). Therefore, in unweighted graphs we improve the results in~\cite{erickson-nayyeri,kusakari-masubuchi,giappo_rectilinear}.

Erickson and Nayyeri~\cite{erickson-nayyeri} generalized the work in~\cite{giappo2} to the case in which the $k$ terminal pairs lie on $h$ face boundaries. They prove that $k$ non-crossing paths, if they exists, can be found in $2^{O(h^2)}n\log k$ time. Applying our results, if the graph is unweighted, then the time complexity decreases to $2^{O(h^2)}n$ in the worst case.

The same authors of~\cite{giappo2} used their algorithm to compute $k$ non-crossing rectilinear paths with minimum total length in a plane with $r$ obstacles~\cite{giappo_rectilinear}. They found such paths in $O(n\log n)$ worst-case time, where $n=r+k$, which reduces to $O(n)$ worst-case time if the graph is unweighted by using our results.

Kusakari \emph{et al.}~\cite{kusakari-masubuchi} showed that a set of non-crossing forests in a planar graph can be found in $O(n\log n)$ worst-case time, where two forest $F_1$ and $F_2$ are \emph{non-crossing} if for any pair of paths $p_1\subseteq F_1$ and $p_2\subseteq F_2$, $p_1$ and $p_2$ are non-crossing. With our results, if the graph is unweighted, then the time complexity becomes linear.

\paragraph{Our approach} We represent the structure of terminal pairs by a partial order called \emph{genealogy tree}. We introduce a new class of graphs, ISP subgraphs, that partition a planar graph into regions that preserve distances. Our algorithm is split in two parts.

In the first part we use Eisenstat and Klein's algorithm  that gives a sequence of shortest path trees rooted in the vertices of the external face. We choose some specific shortest paths from each tree to obtain a sequence of ISP subgraphs $X_1,\ldots X_k$. By using the distance preserving property of regions generated by ISP subgraphs', we prove that $X_i$ contains a shortest $s_i$-$t_i$ path, for all $i\in\{1,\ldots,k\}$.

In the second part of our algorithm, we extract from each $X_i$ a shortest $s_i$-$t_i$ path and we obtain a set of shortest non-crossing paths that is our goal. In this part we strongly use the partial order given by the genealogy tree.

\paragraph{Structure of the paper} After giving some definitions in Section~\ref{sec:definitions}, in Section~\ref{sec:ISP} we explain the main theoretical novelty.
In Section \ref{sec:our_algorithm} first we resume Eisenstat and Klein's algorithm in Subsection~\ref{section:klein's_algorithm}, then in Subsections~\ref{sec:TOPOLINO} and \ref{sec:computational_complexity} we show the two parts of our algorithm, and we prove the whole computational complexity. Conclusions are given  in Section~\ref{sec:conclusions}.

\section{Definitions}\label{sec:definitions}

Let $G$ be a plane graph, i.e., a planar graph with a fixed planar embedding. We denote by $f^\infty_{G}$ (or simply $f^{\infty}$) its unique infinite face, it will be also referred to as the \emph{external} face of $G$. Given a face $f$ of $G$ we denote by $\partial f$ its boundary cycle. Topological and combinatorial definitions of planar graph, embedding and face can be found in~\cite{gross-tucker}.
 
We recall standard union and intersection operators on graphs, for convenience we define the empty graph as a graph without edges.

\begin{definition}
Given two undirected (or directed) graphs $G=(V(G),E(G))$ and $H=(V(H),E(H))$, we define the following operations and relations:
\begin{itemize}
\item $G\cup H=(V(G)\cup V(H),E(G)\cup E(H))$,
\item $G\cap H=(V(G)\cap V(H),E(G)\cap E(H))$,
\item $G\subseteq H\Longleftrightarrow V(G)\subseteq V(H)$ and $E(G)\subseteq E(H)$,
\item $G\setminus H=(V(G),E(G)\setminus E(H))$,
\item $G=\emptyset\Longleftrightarrow$ $E(G)=\emptyset$ ($V(G)$ can be nonempty).
\end{itemize}
\end{definition}

\noindent
Given an undirected (resp., directed) graph $G=(V(G),E(G))$, given an edge (resp., dart) $e$ and a vertex $v$ we write, for short, $e\in G$ in place of $e\in E(G)$ and $v\in G$ in place of $v\in V(G)$.

We denote by $uv$ the edge whose endpoints are $u$ and $v$ and we denote by $\dd{uv}$ the dart from $u$ to $v$. For every dart $\dd{uv}$ we define $\rev[\dd{uv}]=\dd{vu}$ and $\head[\dd{uv}]=v$.
For every vertex $v\in V(G)$ we define the \emph{degree of $v$} as $deg(v)=|\{e\in E(G) \,|\, \text{$v$ is an endpoint of $e$}\}|$.

For each $\ell\in\mathbb{N}$ we denote by $[\ell]$ the set $\{1,\ldots,\ell\}$.

Given a (possibly not simple) cycle $C$, we define the \emph{region bounded by $C$}, denoted by $R_{C}$, as the maximal subgraph of $G$ whose external face has $C$ as boundary.

\subsection{Paths and non-crossing paths}
\label{section:paths_and_non-crossing_paths}

Given a directed path $p$ we denote by $\uuu{p}$ its undirected version, in which each dart $\dd{ab}$ is replaced by edge $ab$; moreover, we denote by $\rev[p]$ its reverse version, in which each dart $\dd{ab}$ is replaced by dart $\dd{ba}$.

We say that a path $p$ is an \emph{$a$-$b$ path} if its extremal vertices are $a$ and $b$; clearly, if $p$ is a directed path, then $p$ starts in $a$ and it ends in $b$. Moreover, given $i\in[k]$, we denote by \emph{$i$-path} an $s_i$-$t_i$ path, where $s_{i}, t_{i}$ is one of the terminal pairs on the external face.

Given an $a$-$b$ path $p$ and a $b$-$c$ path $q$, we define $p\circ q$ as the (possibly not simple) $a$-$c$ path obtained by the union of $p$ and $q$.

Let $p$ be a simple path and let $a,b\in V(p)$. We denote by $p[a,b]$ the subpath of $p$ with extremal vertices $a$ and $b$.

We denote by $\w(p)$ the length of a path $p$ of a general positive weighted graph $G$. If $G$ is unweighted, then we denote the length of $p$ as $|p|$, that is the number of edges.

We say that two paths in a plane graph $G$ are \emph{non-crossing} if the (undirected) curves they describe in the graph embedding do not cross each other, non-crossing paths may share vertices and/or edges or darts. This property obviously depends on the embedding of the graph;  a combinatorial definition of non-crossing paths can be based on the \emph{Heffter-Edmonds-Ringel rotation principle}~\cite{gross-tucker}.  Crossing and non-crossing paths are given in Figure~\ref{fig:non_crossing_and_single-touch}.

\begin{figure}[h]
\captionsetup[subfigure]{justification=centering}
\centering
	\begin{subfigure}{2.6cm}
\begin{overpic}[width=2.6cm,percent]{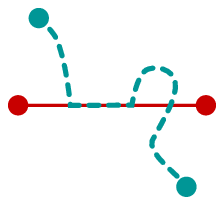}
\end{overpic}
\caption{}\label{fig:non-crossing_a}
\end{subfigure}
\quad
	\begin{subfigure}{2.6cm}
\begin{overpic}[width=2.6cm,percent]{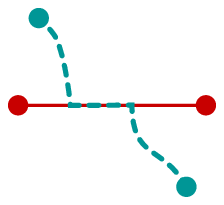}
\end{overpic}
\caption{}\label{fig:non-crossing_b}
\end{subfigure}
\quad
	\begin{subfigure}{2.6cm}
\begin{overpic}[width=2.6cm,percent]{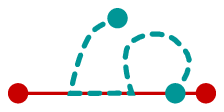}
\end{overpic}
\caption{}\label{fig:non-crossing_c}
\end{subfigure}	
\quad
	\begin{subfigure}{2.6cm}
\begin{overpic}[width=2.6cm,percent]{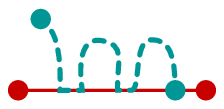}
\end{overpic}
\caption{}\label{fig:non-crossing_d}
\end{subfigure}	
\quad
  \caption{paths in (\subref{fig:non-crossing_a}) and (\subref{fig:non-crossing_b}) are crossing, while paths in (\subref{fig:non-crossing_c}) and (\subref{fig:non-crossing_d}) are non-crossing.} 
\label{fig:non_crossing_and_single-touch}
\end{figure}

\subsection{Genealogy tree}
\label{section:genealogy_tree}

W.l.o.g., we assume that  terminal pairs are distinct, i.e., there is no pair $i,j\in[k]$ such that $\{s_i,t_i\}=\{s_j,t_j\}$. Let $\gamma_i$ be the path in $f^\infty_G$ that goes clockwise from $s_i$ to $t_i$, for $i\in[k]$. We also assume that pairs $\{(s_i,t_i)\}_{i\in[k]}$ are \emph{well-formed}, i.e., for all $j,\ell\in[k]$ either ${\gamma_j}\subset{\gamma_\ell}$ or ${\gamma_j}\supset{\gamma_\ell}$ or ${\gamma_j}\cap{\gamma_\ell}=\emptyset$; otherwise it can be easily seen that it is not possible to find a set of $k$ non-crossing paths joining terminal pairs. This property can be easily verified in linear time, since it corresponds to checking that a string of parentheses is balanced, and it can be done by a sequential scan of the string. 

We define here a partial ordering as in~\cite{err_giappo,giappo2} that represents the inclusion relation between $\gamma_i$'s. This relation intuitively corresponds to an \emph{adjacency} relation between non-crossing shortest paths joining each pair.
Choose an arbitrary $i^*$ such that there are neither $s_{j}$ nor $t_{j}$, with $j\neq i^*$, walking on $\esterna$ from $s_{i^*}$ to $t_{i^*}$ (either clockwise or counterclockwise), and let $e^*$ be an arbitrary edge on that walk. For each $j\in[k]$, we can assume that $e^*\not\in\gamma_j$, indeed if it is not true, then it suffices to switch $s_j$ with $t_j$. We say that  $i \prec j$ if $\gamma_i\subset\gamma_j$. We define the \emph{genealogy tree}  $T_G$ of a set of well-formed terminal pairs as the transitive reduction of poset $([k],\prec)$. W.l.o.g., we assume that $i^*=1$, hence the root of $T_G$ is 1.

If $i\prec j$, then we say that $i$ is a \emph{descendant} of $j$ and $j$ is an \emph{ancestor} of $i$. Moreover, we say that $j$ is the \emph{parent of $i$}, and we write $p(i)=j$, if $i\prec j$ and there is no $r$ such that $i\prec r$ and $r \prec j$.
Figure~\ref{fig:genealogy_tree} shows a set of well-formed terminal pairs, and the corresponding genealogy tree for $i^*=1$.

From now on, in all figures we draw $f^\infty_G$ by a solid light grey line.
W.l.o.g.,  we assume that the external face is a simple cycle, hence, $G$ is a biconnected graph. Indeed, if not, it suffices to solve the NCSP problem in each biconnected component.

\begin{figure}[h]
\captionsetup[subfigure]{justification=centering}
\centering
	\begin{subfigure}{4.5cm}
\begin{overpic}[width=4.5cm,percent]{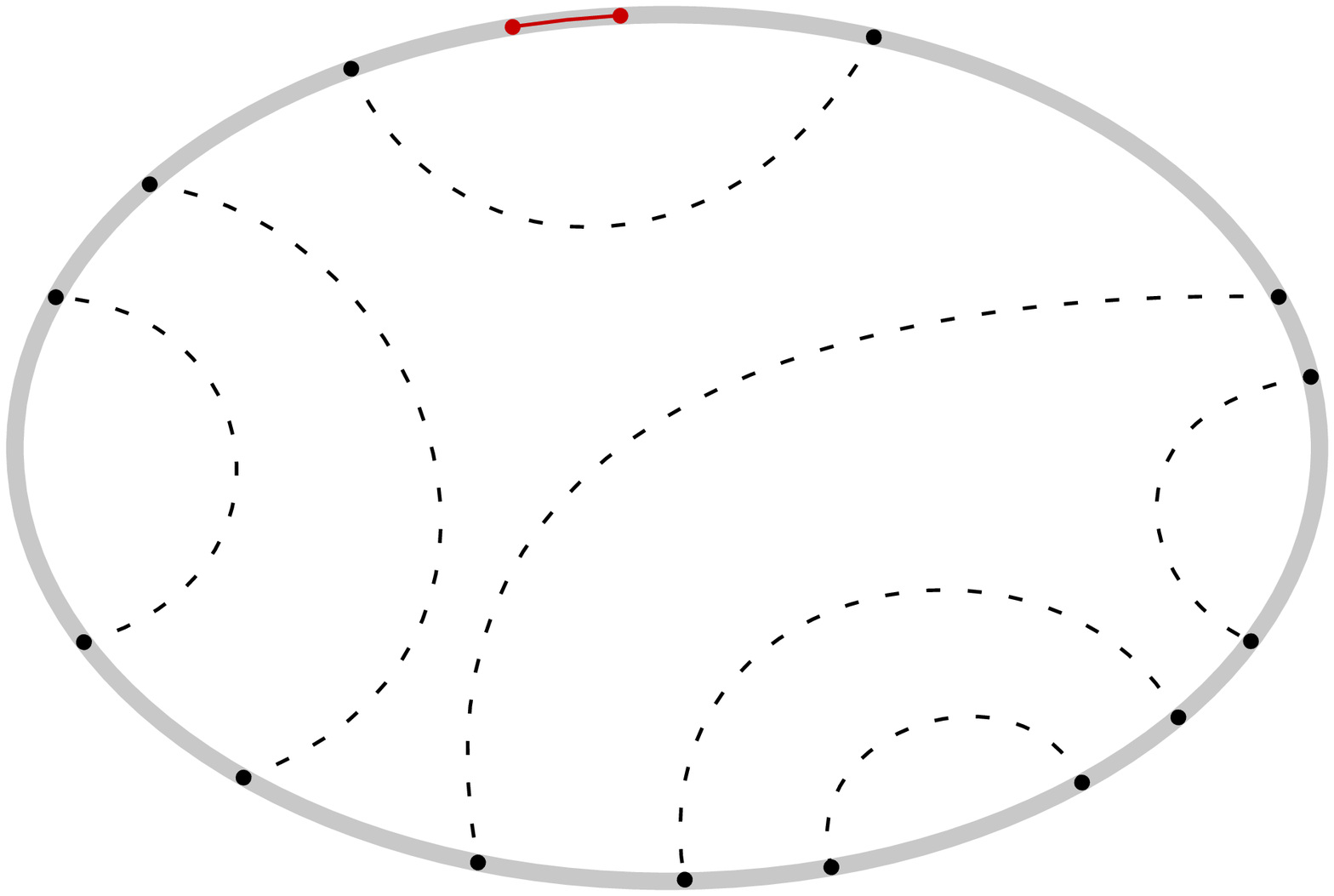}
\put(65,66){$s_1$}
\put(23,65){$t_1$}
\put(98,45){$s_2$}
\put(34.5,-4.5){$t_2$}
\put(100,37.2){$s_3$}
\put(94,12.5){$t_3$}
\put(87.5,7.5){$s_4$}
\put(50,-6){$t_4$}
\put(78.6,2.5){$s_5$}
\put(61,-5.5){$t_5$}
\put(15,3){$s_6$}
\put(4,55){$t_6$}
\put(-2,16.5){$s_7$}
\put(-3,45){$t_7$}

\put(41,67.5){$e^*$}
\end{overpic}
\end{subfigure}
\qquad\qquad
	\begin{subfigure}{1.9cm}
\begin{overpic}[width=1.9cm,percent]{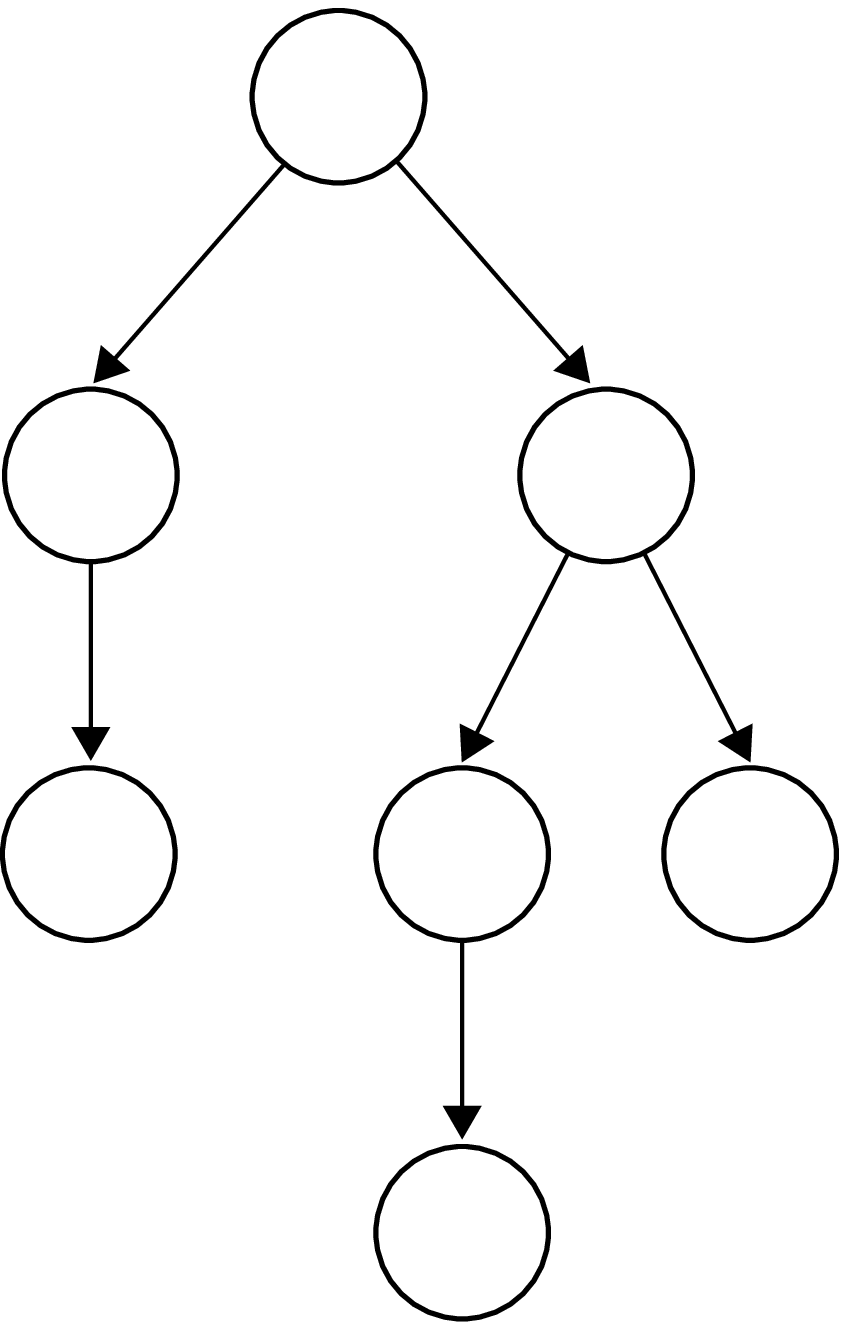}
\put(23.15,89.25){1}
\put(43.5,60.8){2}
\put(4.2,60.8){6}
\put(54.5,31.9){3}
\put(32,31.9){4}
\put(4,31.9){7}
\put(32.5,3){5}
\end{overpic}
\end{subfigure}
\par\smallskip
  \caption{on the left a set of well-formed terminal pairs. Any value in $\{1,3,5,7\}$ can be chosen as $i^*$.  If we choose $i^*=1$, then we obtain the genealogy tree on the right.}
\label{fig:genealogy_tree}
\end{figure}

\section{ISP subgraphs}\label{sec:ISP}

In this section we introduce the concept of \emph{incremental shortest paths (ISP) subgraph} of a graph $G$, that is a subgraph incrementally built by adding a sequence of shortest paths in $G$ starting from $f^\infty_G$ (see Definition~\ref{def:ISP}). The interest towards ISP subgraphs is due to the fact that for any two vertices $a,b$ in $G$ lying in a same face $f$ of the ISP subgraph there is always a shortest path in $G$ joining $a$ and $b$ contained in $f$ (boundary included). All the results of this section hold for positive edge weighted graphs, where the length of a path is the sum of edge weights instead of the number of edges.

This is the main novel result of this paper, that allows us to prove that, in order to build the union of shortest paths joining terminal pairs, we can start from the union of some of the shortest paths computed by the algorithm in~\cite{klein1}.

\begin{definition}\label{def:ISP}
A graph $X$  is an \emph{incremental shortest paths (ISP) subgraph} of a positive weighted graph $G$ if $X=X_r$, where
$X_{0}$, $X_{1}$, \ldots, $X_{r}$ is a sequence of subgraphs of $G$ built in the following way: $X_0=f^\infty_G$ and $X_i=X_{i-1}\cup p_i$, where $p_i$ is a shortest $x_i$-$y_i$ path in $G$ with  $x_i,y_i\in X_{i-1}$. 
\end{definition}

\begin{remark}\label{remark:degree_1}
All degree one vertices of an ISP subgraph of $G$ are in $f^\infty_G$.
\end{remark}

We define now operator $\downarrow$, that given a path $\pi$ and a cycle $C$, in case $\pi$ crosses $C$, replaces some subpaths of $\pi$ by some portions of $C$, as depicted in Figure~\ref{fig:example_ISP+downarrow}(\subref{fig:downarrow}). We observe that $\pi\downarrow \partial f$ could be not a simple path even if $\pi$ is.

\begin{definition}\label{def:downarrow}
Let $C$ be a cycle in $G$. Let $a,b$ be two vertices in $R_{C}$ and let $\pi$ be a simple $a$-$b$ path. In case $\pi\subseteq R_C$ we define  $\pi\downarrow C=\pi$. Otherwise, let $(v_1,v_2,\ldots,v_{2r})$ be the ordered subset of vertices of $\pi$ that satisfies the following:  $\pi[a,v_1]\subseteq R_C$, $\pi[v_{2r},b]\subseteq R_C$, $\pi[v_{2i-1},v_{2i}]\cap R_C=\emptyset$ and $\pi[v_{2i},v_{2i-1}]\subseteq R_C$, for all $i\in[r]$. For every $i\in[r]$, let $\mu_i$ be the $v_{2i-1}$-$v_{2i}$ path on $C$ such that the region bounded by $\mu_i\circ\pi[v_{2i-1},v_{2i}]$ does not contain $R_C$. We define $\pi\downarrow C=\pi[a,v_1]\circ\mu_1\circ\pi[v_2,v_3]\circ\mu_2\ldots\circ\pi[v_{2r-2},v_{2r-1}]\circ\mu_{r}\circ\pi[v_{2r},b]$.
\end{definition}

Definition~\ref{def:ISP} and Definition~\ref{def:downarrow} are depicted in 
Figure~\ref{fig:example_ISP+downarrow}.

\begin{figure}[h]
\captionsetup[subfigure]{justification=centering}
\centering
	\begin{subfigure}[t]{4.5cm}
\begin{overpic}[width=4.5cm,percent]{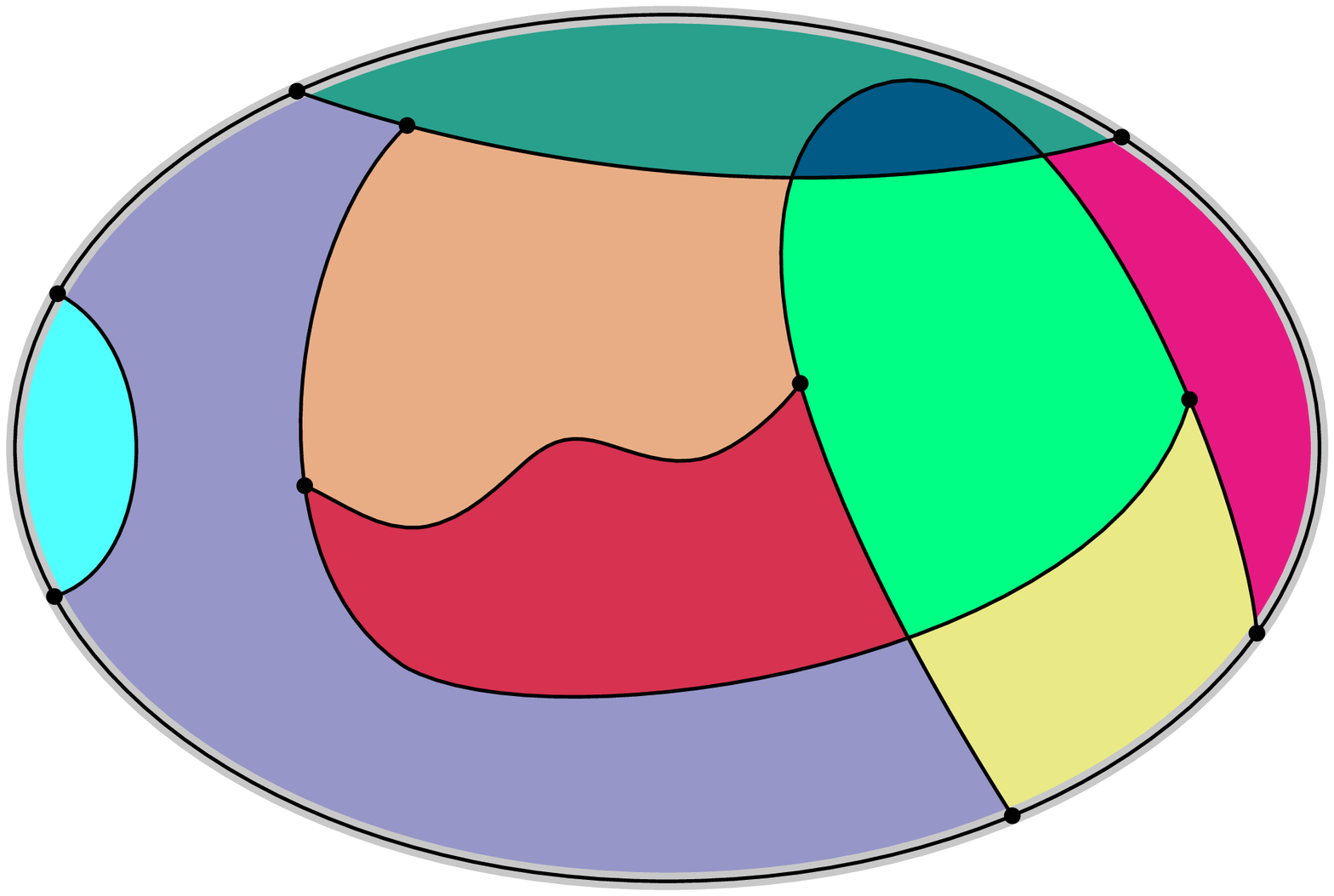}
\put(75,.5){$x_1$}
\put(95,15.2){$y_1$}
\put(-3,17.5){$x_2$}
\put(-4.5,44.4){$y_2$}
\put(31.6,58.5){$x_4$}
\put(90.5,36.5){$y_4$}
\put(85,58){$x_3$}
\put(17.5,63.5){$y_3$}
\put(61,38){$x_5$}
\put(15.2,29.5){$y_5$}
\end{overpic}
\caption{}\label{fig:ISP}
\end{subfigure}
\qquad\qquad
	\begin{subfigure}[t]{5.5cm}
\begin{overpic}[width=5.5cm,percent]{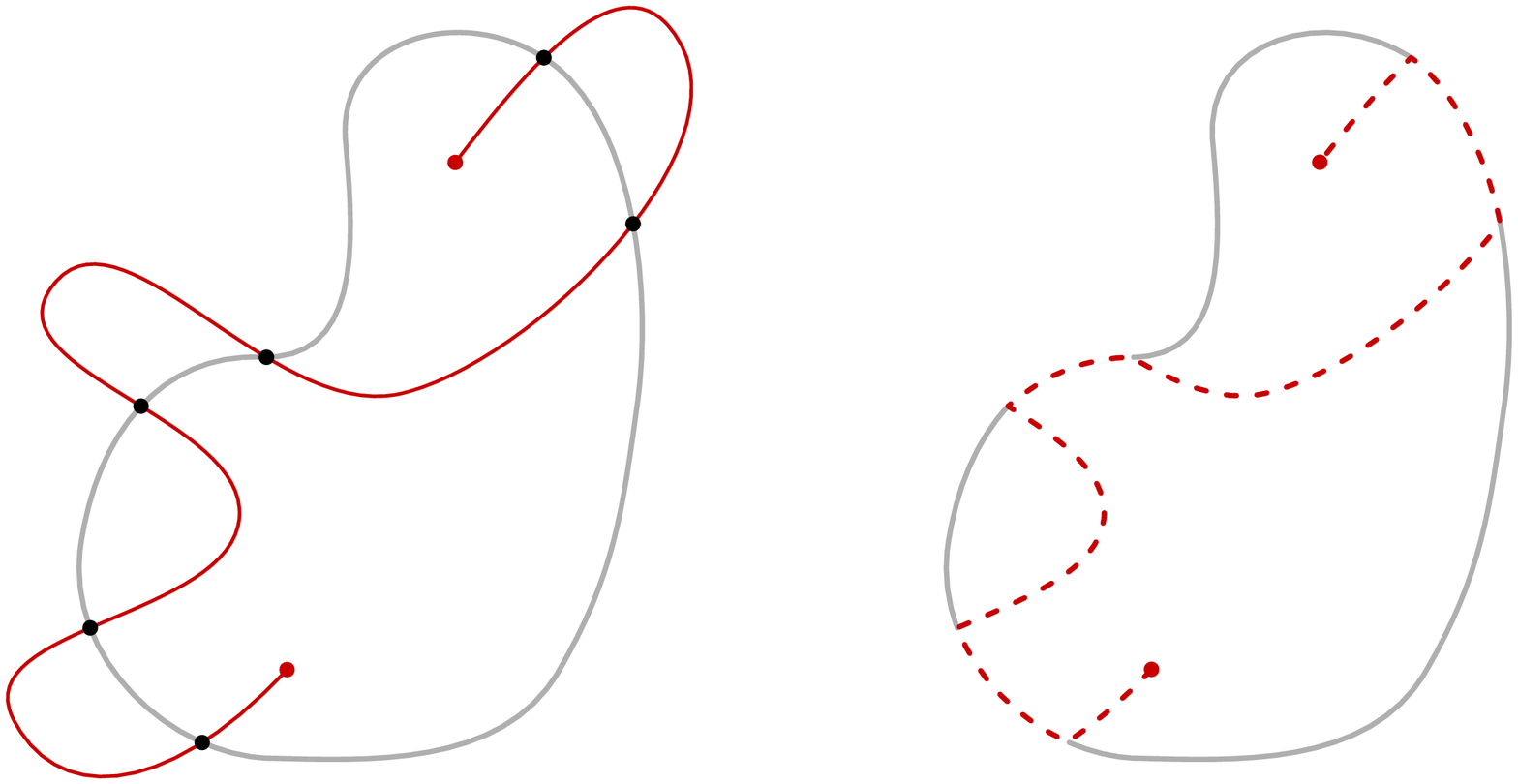}
\put(31,15){$C$}
\put(3.5,35.5){$\pi$}
\put(61.5,31){$\pi\downarrow C$}

\put(19,8){$a$}
\put(26,36){$b$}

\put(12.5,-2.5){$v_1$}
\put(-1.5,11.5){$v_2$}
\put(1,23){$v_3$}
\put(14.5,30.5){$v_4$}
\put(43,32){$v_5$}
\put(30.5,51.2){$v_6$}

\end{overpic}
\caption{}\label{fig:downarrow}
\end{subfigure}	
  \caption{ (\subref{fig:ISP}) an ISP subgraph $X$ of $G$; extremal vertices $x_{i}, y_{i}$ of $p_i$ are drawn, for $i \in [5]$. Different faces of $X$ have different colors. An example of Definition~\ref{def:downarrow} is given in (\subref{fig:downarrow}).}
\label{fig:example_ISP+downarrow}
\end{figure}

In the following theorem we show that, given any face $f$ of an ISP subgraph $X$ of $G$,  every path $\pi$ in $G$ whose extremal vertices are in $R_{\partial f}$ is not shorter than $\pi\downarrow \partial f$.

\begin{theorem}\label{prop:main} 
Let $X$ be an ISP subgraph of $G$. Let $f$ be any face of $X$, and let $a,b$ be two distinct vertices in $R_{\partial f}$. For any $a$-$b$ path $\pi$ we have $\w(\pi\downarrow \partial f) \leq \w(\pi)$.
\end{theorem}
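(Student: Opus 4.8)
The plan is to reduce the theorem to a statement about a single excursion of $\pi$ outside $R_{\partial f}$, and then to prove that statement by induction on the number of shortest paths used to build $X$.

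First I would observe that, by the very form of Definition~\ref{def:downarrow}, the path $\pi\downarrow\partial f$ agrees with $\pi$ on the retained subpaths $\pi[a,v_1],\pi[v_2,v_3],\ldots,\pi[v_{2r},b]$ and differs from it only in that each \emph{excursion} $\sigma_i:=\pi[v_{2i-1},v_{2i}]$ (a maximal subpath of $\pi$ lying outside $R_{\partial f}$) is replaced by the facing boundary arc $\mu_i\subseteq\partial f$. Hence
$$\w(\pi)-\w(\pi\downarrow\partial f)=\sum_{i}\bigl(\w(\sigma_i)-\w(\mu_i)\bigr),$$
so it suffices to prove the per-excursion inequality $\w(\mu_i)\le\w(\sigma_i)$ for every $i$; that is, I must show that each arc of $\partial f$ chosen by the operator is no longer than the portion of $\pi$ it replaces. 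For the operator to be defined I may assume $\pi$ is simple.

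I would prove $\w(\mu_i)\le\w(\sigma_i)$ by induction on the number $r$ of paths $p_1,\ldots,p_r$ in the sequence of Definition~\ref{def:ISP} defining $X$. In the base case $X=X_0=f^\infty_G$; the only bounded face $f$ satisfies $\partial f=f^\infty_G$, so $R_{\partial f}=G$, $\pi\subseteq R_{\partial f}$, there are no excursions, and $\pi\downarrow\partial f=\pi$. For the inductive step write $X=X_{r-1}\cup p_r$ and fix a face $f$ of $X$. If $f$ is already a face of $X_{r-1}$, then $\partial f$ and $R_{\partial f}$ are unchanged and the claim follows from the inductive hypothesis. Otherwise $f$ is one of the pieces into which $p_r$ cuts some face $g$ of $X_{r-1}$; since in a planar embedding $p_r$ can change face of $X_{r-1}$ only at a vertex of $X_{r-1}$, and $p_r$ is simple, the new boundary decomposes as $\partial f=\rho\cup\tau$, where $\tau\subseteq\partial g$ is an old arc and $\rho$ is a subpath of the shortest path $p_r$; moreover $R_{\partial f}\subseteq R_{\partial g}$ and $\rho$ separates $R_{\partial f}$ from the complementary piece of $R_{\partial g}$.

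Now I would classify each excursion $\sigma_i$ by how it leaves $R_{\partial f}$. If $\sigma_i$ stays inside $R_{\partial g}$, then, leaving $R_{\partial f}$ but not $R_{\partial g}$, it can cross only $\rho$, so its endpoints lie on $\rho$ and $\mu_i$ is the subpath of $\rho$ between them. Being a subpath of the shortest path $p_r$, this $\mu_i$ is itself a shortest $v_{2i-1}$-$v_{2i}$ path in $G$ (subpaths of shortest paths are shortest, also for positive weights), whence $\w(\mu_i)\le\w(\sigma_i)$. If instead $\sigma_i$ also leaves $R_{\partial g}$, I would first push the portions of $\sigma_i$ lying outside $R_{\partial g}$ back onto $\partial g$ using the inductive hypothesis applied to the face $g$ of $X_{r-1}$; this does not increase length and replaces $\sigma_i$ by a no-longer path inside $R_{\partial g}$ with the same endpoints, reducing to the previous case. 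Summing over all excursions then yields $\w(\pi\downarrow\partial f)\le\w(\pi)$.

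I expect the main obstacle to be the topological bookkeeping in the second subcase: an excursion may alternately cross $\rho$ and $\tau$ several times, so I must argue precisely that the two-stage rerouting—first projecting onto $\partial g$ via the inductive hypothesis, then onto $\rho\subseteq\partial f$—is no longer than $\pi\downarrow\partial f$ itself, i.e.\ that the operator of Definition~\ref{def:downarrow} behaves monotonically under composition along the nested regions $R_{\partial f}\subseteq R_{\partial g}$. Checking that the arcs selected by the ``region avoiding $R_C$'' rule are consistent across the two stages, and that no retained subpath is double-counted, is where the care lies; the purely metric content, by contrast, is just the elementary fact that subpaths of the shortest path $p_r$ are shortest.
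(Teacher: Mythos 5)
Your proposal is correct and follows essentially the same route as the paper's proof: induction on the sequence $p_1,\ldots,p_r$ building $X$, the trivial case when the path stays in $R_{\partial f}$, the subpath-of-a-shortest-path argument for portions that stay in the old face $g$ (the paper's $f'$), and the two-stage projection (first onto $\partial g$ via the inductive hypothesis, then onto $\partial f$) otherwise. The differences are cosmetic---you argue excursion-by-excursion where the paper treats the whole path, and you implicitly use the paper's explicit reduction to the case where $p_r$ meets $X_{r-1}$ only at its endpoints---while the composition/consistency fact you flag as the ``main obstacle'' is exactly what the paper itself dismisses with ``it is easy to see that $\pi\downarrow\partial f=(\pi\downarrow\partial f')\downarrow\partial f$'', so your level of rigor matches the published argument.
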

\begin{proof}
Let $\{X_i\}_{i\in[r]}$ be the sequence of ISP subgraphs such that $X=X_r$, and let $p_i$ be the path that builds $X_i$ from $X_{i-1}$. We assume that $p_{i}$ has no vertices in $X_{i-1}$ other than its endpoints $x_i$ and $y_i$, otherwise we can split $p_{i}$ on intersections with $X_{i-1}$ and repeatedly apply the same proof to each portion of $p_{i}$. We prove the thesis by induction on $j$ for every choice of a face $f$ of $X_j$, $a,b\in R_{\partial f}$ and $a$-$b$ path $\pi$.

In the base case, where $j=1$, there are exactly two faces $A$ and $B$ in $X_1$ other than $f^\infty_G$. Let $a,b\in  V(R_{\partial A})$ (the same argument holds for $B$) and let $\pi$ be any $a$-$b$ path. In case $\pi \subseteq R_{\partial A}$ we have $\pi\downarrow \partial A = \pi$, hence the thesis trivially holds. In case  $\pi \not\subseteq R_{\partial A}$, then $\pi\downarrow \partial A$ is not longer than $\pi$ because some subpaths of $\pi$ have been replaced by subpaths of $p_1$ with the same extremal vertices and $p_1$ is a shortest path.

We assume that the thesis holds for all $i<j$ and we prove it for $j$. Let $f$ be a face of $X_j$ and let $f'$ be the unique face of $X_{j-1}$ such that $f\subset f'$ (Figure~\ref{fig:f_f'}(\subref{fig:f_f'_1}) and ~\ref{fig:f_f'}(\subref{fig:f_f'_2}) show faces $f$ and $f'$, respectively). Let $a,b\in V(R_{\partial f})$ and let $\pi$ be an $a$-$b$ path. Three cases may occur:
\begin{itemize}
\item \textbf{case $\pi\subseteq R_{\partial f}$:}
the thesis trivial holds, since $\pi\downarrow \partial f=\pi$;

\item \textbf{case $\pi\subseteq R_{\partial f'}$ and $\pi\not\subseteq R_{\partial f}$:} since $\pi\subseteq R_{\partial f'}$ and $\pi\not\subseteq R_{\partial f}$, then $\pi$ crosses $p_j$ an even number of times, thus $\pi\downarrow \partial f$ is not longer than $\pi$, since some subpaths of $\pi$ have been replaced by subpaths of $p_j$ with the same extremal vertices and $p_j$ is a shortest path (see Figure~\ref{fig:f_f'}(\subref{fig:f_f'_3}) where $\pi$ is the red and dashed path);

\item \textbf{case $\pi\not\subseteq R_{\partial f'}$:} since $f \subseteq f'$, it is easy to see that $\pi \downarrow \partial f = (\pi \downarrow \partial f') \downarrow \partial f$. Let us consider $\pi' = \pi\downarrow \partial f'$. By induction, it holds that $\w(\pi') \leq \w(\pi)$. We observe now that $\pi'\subseteq R_{\partial f'}$ and $\pi'\not\subseteq R_{\partial f}$, hence the previous case applies, showing that $\w(\pi' \downarrow \partial f) \leq \w(\pi')$.
Finally, the two previous inequalities imply $\w(\pi \downarrow \partial f) \leq \w(\pi \downarrow \partial f') \leq \w(\pi)$ (see Figure~\ref{fig:f_f'}(\subref{fig:f_f'_3}) where $\pi$ is the green and continue path).\qed
\end{itemize}
\end{proof}

\begin{figure}[h]
\captionsetup[subfigure]{justification=centering}
\centering
	\begin{subfigure}{4.2cm}
\begin{overpic}[width=4.2cm,percent]{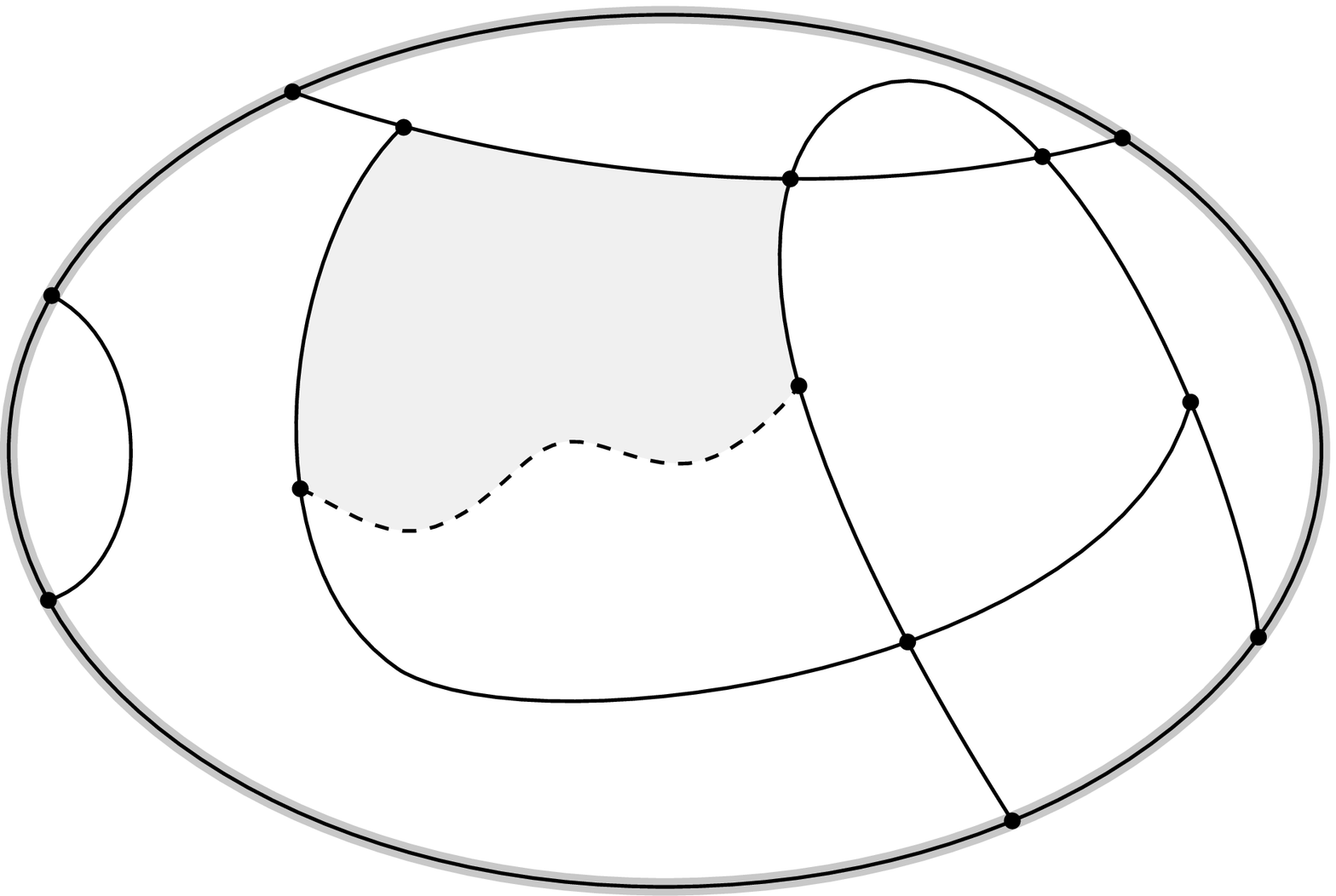}
\put(40,43){$f$}
\put(40,27){$p_j$}
\put(5.5,61){$X_j$}
\end{overpic}
\caption{a face $f$ in $X_j$ \\$\mathbf{}$ \\$\mathbf{}$}\label{fig:f_f'_1}
\end{subfigure}
\quad
	\begin{subfigure}{4.2cm}
\begin{overpic}[width=4.2cm,percent]{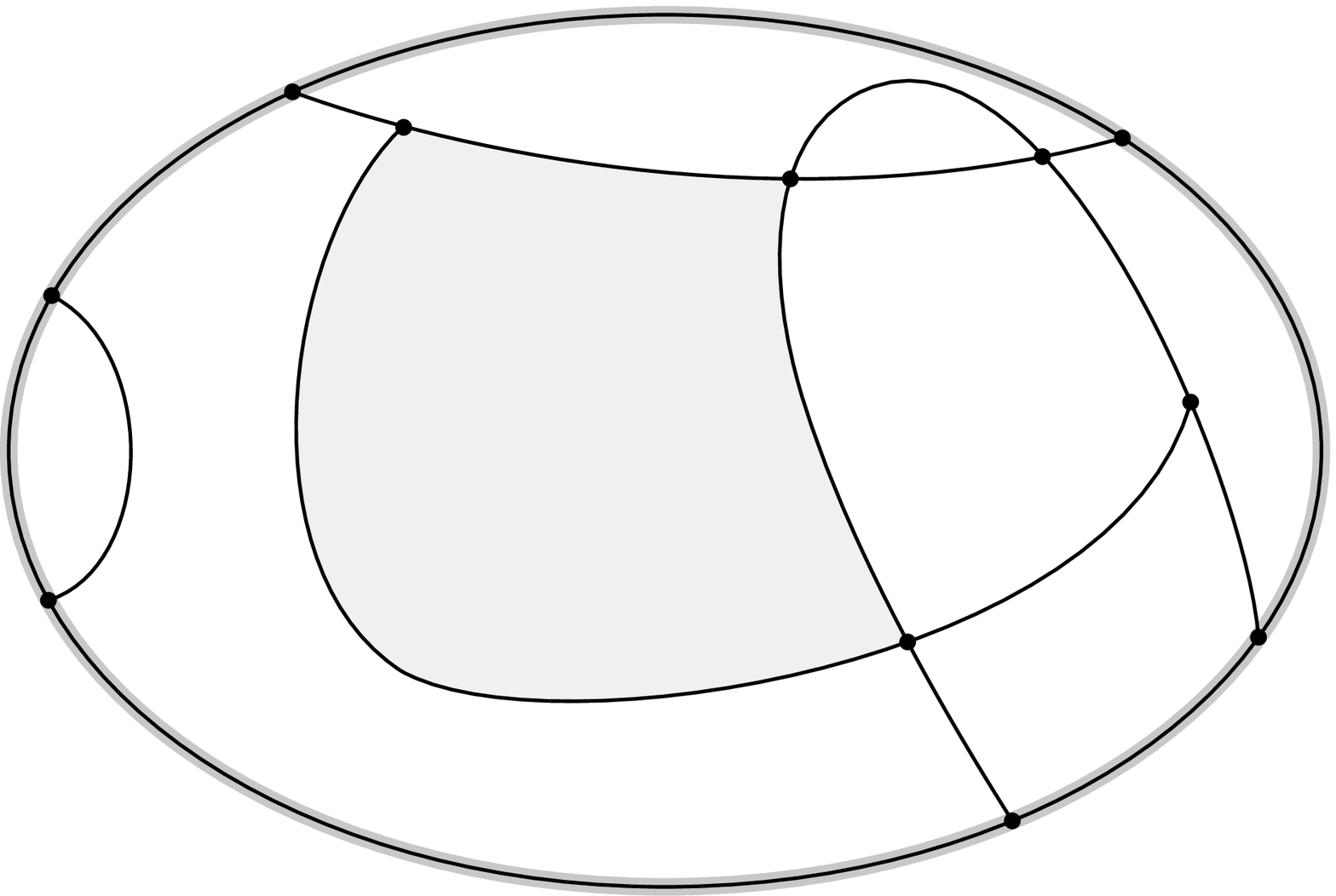}
\put(42,35){$f'$}
\put(0,61){$X_{j-1}$}
\end{overpic}
\caption{a face $f'$ in $X_{j-1}$ \\$\mathbf{}$ \\$\mathbf{}$}\label{fig:f_f'_2}
\end{subfigure}	
\quad
	\begin{subfigure}{4.2cm}
\begin{overpic}[width=4.2cm,percent]{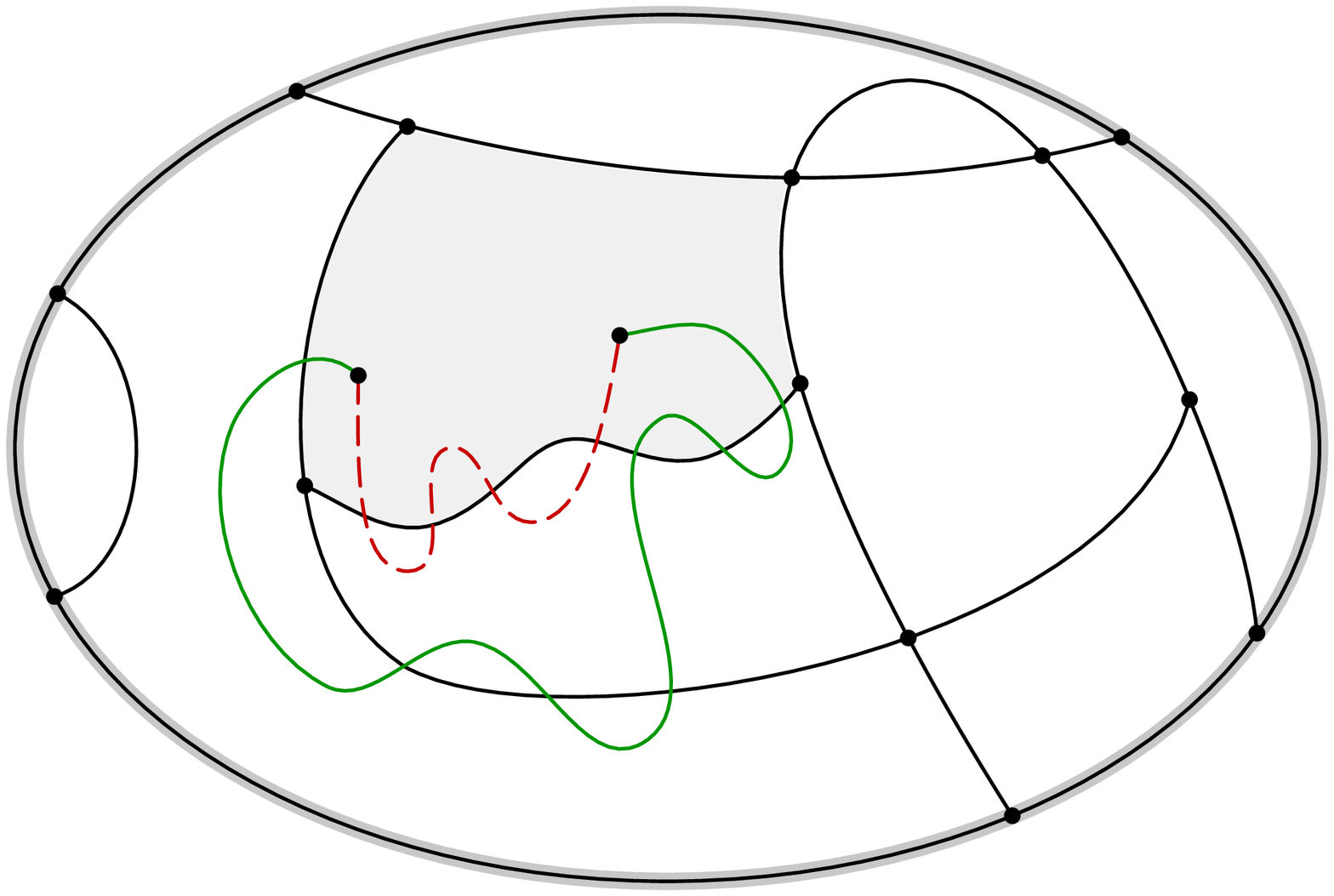}
\put(5.5,61){$X_j$}

\put(27,41){$a$}
\put(41,42){$b$}

\end{overpic}
\caption{two examples of $\pi$ in the second case (dashed red) and third case (continuous green)}\label{fig:f_f'_3}
\end{subfigure}	
\caption{in (\subref{fig:f_f'_1}) and (\subref{fig:f_f'_2}) faces $f$ and $f'$ build on the ISP graph in Figure~\ref{fig:example_ISP+downarrow}(\subref{fig:ISP}). In (\subref{fig:f_f'_3}) we depict the second and third case of the proof of Theorem~\ref{prop:main}.}
\label{fig:f_f'}
\end{figure}

We can state now the main property of ISP subgraphs.
\begin{corollary}
Let $X$ be an ISP subgraph of $G$ and let $f$ be any face of $X$. For every $a,b\in R_{\partial f}$ there exists a shortest $a$-$b$ path of $G$ contained in $R_{\partial f}$.
\end{corollary}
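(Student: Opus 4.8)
The plan is to derive the corollary directly from Theorem~\ref{prop:main} by applying the operator $\downarrow$ to a globally shortest path and comparing lengths. Fix a face $f$ of $X$ and two vertices $a,b\in R_{\partial f}$. Let $\pi^*$ be any shortest $a$-$b$ path in $G$, which a priori need not be contained in $R_{\partial f}$. First I would consider the path $\pi^*\downarrow\partial f$. By Definition~\ref{def:downarrow}, this path is obtained from $\pi^*$ by replacing every maximal subpath that leaves $R_{\partial f}$ with a portion of the boundary cycle $\partial f$; consequently $\pi^*\downarrow\partial f$ is an $a$-$b$ walk in $G$ that is entirely contained in $R_{\partial f}$ (boundary included).

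Next I would compare lengths. Theorem~\ref{prop:main} gives $\w(\pi^*\downarrow\partial f)\leq\w(\pi^*)$. Conversely, since $\pi^*\downarrow\partial f$ is itself an $a$-$b$ walk in $G$ and $\pi^*$ is a shortest $a$-$b$ path, we have $\w(\pi^*)\leq\w(\pi^*\downarrow\partial f)$. Combining the two inequalities yields $\w(\pi^*\downarrow\partial f)=\w(\pi^*)$, so $\pi^*\downarrow\partial f$ attains the shortest-path distance between $a$ and $b$ in $G$ while lying inside $R_{\partial f}$.

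The only point requiring care is that, as noted after Definition~\ref{def:downarrow}, $\pi^*\downarrow\partial f$ need not be simple. I would resolve this using positivity of the edge weights: a minimum-length $a$-$b$ walk in a positively weighted graph cannot contain a closed subwalk, since such a subwalk would have strictly positive weight and could be deleted to produce a strictly shorter $a$-$b$ walk, contradicting the minimality just established. Hence $\pi^*\downarrow\partial f$ is in fact a simple shortest $a$-$b$ path contained in $R_{\partial f}$, which is exactly the claim. (Equivalently, one may extract any simple $a$-$b$ subpath from the walk $\pi^*\downarrow\partial f$; it remains inside $R_{\partial f}$ and has length at most that of the walk, hence is a shortest path.) I do not expect a genuine obstacle here: the entire substance of the corollary is already contained in Theorem~\ref{prop:main}, and this argument is merely the routine step of projecting an arbitrary shortest path into the region and observing that its length cannot increase.
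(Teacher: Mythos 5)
Your proposal is correct and matches the paper's (implicit) reasoning: the paper states this corollary without proof, treating it as the immediate consequence of Theorem~\ref{prop:main} that you spell out — project a globally shortest $a$-$b$ path into $R_{\partial f}$ via $\downarrow$, note the result stays in $R_{\partial f}$ and is no longer, and conclude by minimality of the original path. Your extra care about $\pi^*\downarrow\partial f$ possibly failing to be simple (resolved by positivity of weights, or by extracting a simple subpath) is a sensible and correct completion of the detail the paper leaves unstated.
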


\section{Our algorithm}\label{sec:our_algorithm}

We summarize in Subsection~\ref{section:klein's_algorithm} the result of Eisenstat and Klein's paper~\cite{klein1}, that deals with the multiple-source shortest paths problem. For the sake of clarity, we split our algorithm in two parts:
\begin{itemize}
\item in Subsection~\ref{sec:TOPOLINO} we introduce algorithm \TOPOLINO, that builds a sequence $\{X_i\}_{i\in[k]}$ of subgraphs of $G$ such that $X_k$ contains a shortest path for each terminal pair, and it possibly contains some extra edges. We anticipate that $X_i\cup f^\infty_G$ is an ISP subgraph of $G$, for all $i\in[k]$. 
\item in Subsection~\ref{sec:computational_complexity} we present algorithm \MINNIE that, by using algorithm \TOPOLINO, builds a directed graph that is exactly the union of the shortest directed paths joining each terminal pair contained in the output of algorithm \TOPOLINO.
\end{itemize}

\subsection{Eisenstat and Klein's result}
\label{section:klein's_algorithm}

The algorithm in~\cite{klein1} takes as input an undirected unweighted planar graph $G$, where $v_1, v_{2},\ldots,v_r$ is the sequence of vertices in the external face of $G$  in clockwise order,  and returns an implicit representation of a sequence of shortest path trees $\mathcal{T}_i$,  for $i\in[r]$, where each $\mathcal{T}_i$ is rooted in $v_{i}$.

The sequence of trees $\mathcal{T}_i$, for $i\in[r]$, is represented by explicitly listing the darts in $\mathcal{T}_1$, and listing the darts that are added to transform $\mathcal{T}_i$ into $\mathcal{T}_{i+1}$, for $1 < i \leq r$ (for each added dart from $x$ to $y$, the unique dart that goes to $y$ in $\mathcal{T}_i$ is deleted; with the only two exceptions of the added dart leading to $v_{i}$, and the deleted dart leading to $v_{i+1}$). Hence, the output of their algorithm is $\mathcal{T}_1$ and a sequence of sets of darts. A key result in~\cite{klein1} shows that if a dart $d$ appears in $\mathcal{T}_{i+1}\setminus\mathcal{T}_i$, then $d$ cannot appear in any $\mathcal{T}_{j+1}\setminus\mathcal{T}_j$, for $j>i$. Thus the implicit representation of the sequence of shortest path trees has size $O(n)$. This representation can be computed in  $O(n)$ worst-case time.

\subsection{Algorithm \TOPOLINO}
\label{sec:TOPOLINO}

Algorithm \TOPOLINO  builds a sequence $\{X_i\}_{i\in[k]}$ of subgraphs of $G$ by using the sequence of shortest path trees given by Eisenstat and Klein's algorithm. We point out that we are not interested in the shortest path trees rooted at every vertex of $f^\infty_G$, but we only need the shortest path trees rooted in $s_i$'s. So, we define $T_i$ as the shortest path tree rooted in $s_i$, for $i\in[k]$. We denote by $T_i[v]$ the path in $T_i$ from $s_i$ to $v$.

The algorithm starts by computing the first subgraph $X_1$, that is just the undirected 1-path in $T_1$, i.e., $\uuu{T_{1}[t_{1}]}$ (we recall that all $T_i$'s trees given by algorithm in~\cite{klein1} are rooted directed tree). Then the sequence of subgraphs $X_i$, for $i=2,\ldots,k$ is computed by adding some undirected paths extracted from the shortest path trees $T_i$'s defined by Eisenstat and Klein's algorithm.

We define the set $H_{i}\subseteq X_i$ of vertices $h$ such that at least one dart $d$ is added while passing from $T_{i-1}$ to $T_{i}$ such that $\head[d]=h$. Hence, $H_i$ is the set of vertices whose parent in $T_{i}$ differs from the parent in $T_{i-1}$.
At iteration $i$, we add path $\uuu{T_i[h]}$ to $X_i$, for each $h$ in $H_i$.

\begin{figure}[h]
\begin{algorithm}[H] 
\SetAlgorithmName{Algorithm \texttt{NCSPsupergraph}}{}{}
\renewcommand{\thealgocf}{}
 \caption{}
 \KwIn{an undirected unweighted planar embedded graph $G$ and $k$ well-formed terminal pairs of vertices $(s_i,t_i)$, for $i\in[k]$, on the external face of $G$}
 \KwOut{an undirected graph $X_k$ that contains a set of non-crossing paths $P=\{\pi_1,\ldots,\pi_k\}$, where $\pi_i$ is a shortest $s_i$-$t_i$ path, for $i\in[k]$}
{$X_1=\overline {T_1[t_1]}$\label{line:1_compute_pi_1}\;
\For{$i=2,\ldots,k$}{
$X_i=X_{i-1}$\label{line:X_i=X_i-1}\;
For all $h\in H_i$, $X_i=X_i\cup \overline{T_i[h]}$\label{line:d2}\;
Let $\eta_i$ be the undirected path on $T_i$ that starts in $t_i$ and walks backwards until a vertex in $X_i$ is reached\label{line:mu_i}\;
$X_i=X_i\cup \eta_i$\label{line:X+mu_i}\;
}
}
\end{algorithm}
\end{figure}

\begin{lemma}\label{lemma:TOPOLINO_O(n)}
Algorithm \TOPOLINO has $O(n)$ worst-case time complexity.
\end{lemma}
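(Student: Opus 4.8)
The plan is to isolate the three kinds of work performed by \TOPOLINO and bound each by $O(n)$, with the path insertions handled by an amortized argument over the monotone sequence $X_1\subseteq X_2\subseteq\cdots\subseteq X_k$. First, by Eisenstat and Klein's result~\cite{klein1} the implicit representation of the shortest-path trees is produced in $O(n)$ time and has total size $O(n)$. I would maintain the current tree $T_i$ explicitly via parent pointers, applying in turn the dart changes listed for the transition from $T_{i-1}$ to $T_i$; each applied change costs $O(1)$, and the set $H_i$ is read off directly as the heads of the added darts. The key fact from~\cite{klein1} — a dart added in one transition is never added again in a later one — guarantees that the total number of dart changes over all transitions is $O(n)$, so both the tree maintenance and $\sum_{i=2}^{k}|H_i|$ are $O(n)$. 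Here I would assume, after a harmless relabelling of the pairs, that the roots $s_1,\ldots,s_k$ are encountered in clockwise order along $f^\infty_G$, so that consecutive root moves $s_{i-1}\to s_i$ consume pairwise disjoint blocks of Eisenstat--Klein steps and the bound telescopes.

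Second, I would implement both the additions $X_i\cup\overline{T_i[h]}$ in Line~\ref{line:d2} and the computation of $\eta_i$ in Lines~\ref{line:mu_i}--\ref{line:X+mu_i} by a single primitive: starting from $h$ (respectively $t_i$), walk backwards in $T_i$ toward the root, appending each traversed edge to $X_i$, and halt as soon as a vertex already belonging to $X_i$ is reached. Membership in the current $X_i$ is tested in $O(1)$ by boolean marks on vertices and edges, which is sound precisely because the $X_i$'s only grow. The crucial observation for the running time is that \emph{every edge traversed by such a walk is new}: if the walk is at a vertex $u\notin X_i$ and steps to its parent along an edge already in $X_i$, then $u\in V(X_i)$, contradicting the halting rule. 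Hence each step before the halt inserts a fresh edge, and each walk costs $O(1)$ plus a constant per newly inserted edge.

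Third, I would amortize. Since every $X_i$ is a subgraph of $G$ and edges are only ever inserted, never deleted, the total number of edge insertions across the whole execution is at most $|E(G)|=O(n)$. The number of walks is $\sum_{i=2}^{k}|H_i|$ for Line~\ref{line:d2} plus one per iteration for $\eta_i$, i.e.\ $O(n)+(k-1)=O(n)$ since $k\le n$. Combining the constant per-walk overhead with the $O(n)$ total insertions bounds the cost of all path additions by $O(n)$, and adding the $O(n)$ spent on Eisenstat--Klein and on tree maintenance yields the claimed bound.

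I expect the main obstacle to be this amortized step: one must argue carefully that the backward walks never re-traverse an edge already in $X_i$, so that their cost is charged to distinct fresh edges rather than repeated over iterations. Making this rigorous rests on the halt-at-first-$X_i$-vertex rule together with the monotonicity of the $X_i$'s. A secondary point, needed to know the walk-based implementation actually reproduces $X_{i-1}\cup\overline{T_i[h]}$ rather than a proper subgraph of it, is that whenever the walk halts at a vertex $v\in X_i$ the remaining portion $T_i[v]$ toward the root is already contained in $X_i$; I would establish this from the way the $X_i$'s are built, but it is a correctness issue separate from the $O(n)$ time bound.
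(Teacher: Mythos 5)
Your proof is correct and takes essentially the same route as the paper's: Eisenstat and Klein's algorithm yields the trees and the sets $H_i$ in $O(n)$ time, and each path addition is implemented as a backward walk on $T_i$ that halts at the first vertex already in $X_i$, so that every edge of $G$ is visited only $O(1)$ times. The paper states this in three lines; your marking scheme and fresh-edge amortization simply make rigorous the parenthetical implementation remark the paper relies on.
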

\begin{proof}
Eisenstat and Klein's algorithm requires $O(n)$ worst-case time, implying that the $H_i$'s and the $T_i$'s can be found in $O(n)$ worst-case time. Algorithm \TOPOLINO visits each edge of $G$ at most $O(1)$ times (in Line~\ref{line:d2}, $\overline{T_i[h]}$ can be found by starting in $h$ and by walking backwards on $T_i$ until a vertex of $X_i$ is found). The thesis follows.\qed
\end{proof}

Figure~\ref{fig:TOPO} shows how algorithm \TOPOLINO builds $X_4$ starting from $X_3$. Starting from $X_{3}$ in Figure~\ref{fig:TOPO}(\subref{fig:TOPO_1}), Figure~\ref{fig:TOPO}(\subref{fig:TOPO_2}) shows the darts whose head is in $H_4$. Consider the unique dart $d$ whose head is the vertex $x$: we observe that $\uuu{d}$ is already in $X_3$, this happens because $\rev[d]\in T_3[t_3]$. Indeed, it is possible that at iteration $i$ some portions of some undirected paths that we add in Line~\ref{line:d2} are already in $X_{i-1}$. Figure \ref{fig:TOPO}(\subref{fig:TOPO_3}) highlights 
$\bigcup_{h\in H_4}T_4[h]$ and $\eta_4$, while in Figure \ref{fig:TOPO}(\subref{fig:TOPO_4}) $X_4$ is drawn.

\begin{figure}[h]
\captionsetup[subfigure]{justification=centering}
\centering
	\begin{subfigure}{4.5cm}
\begin{overpic}[width=4.5cm,percent]{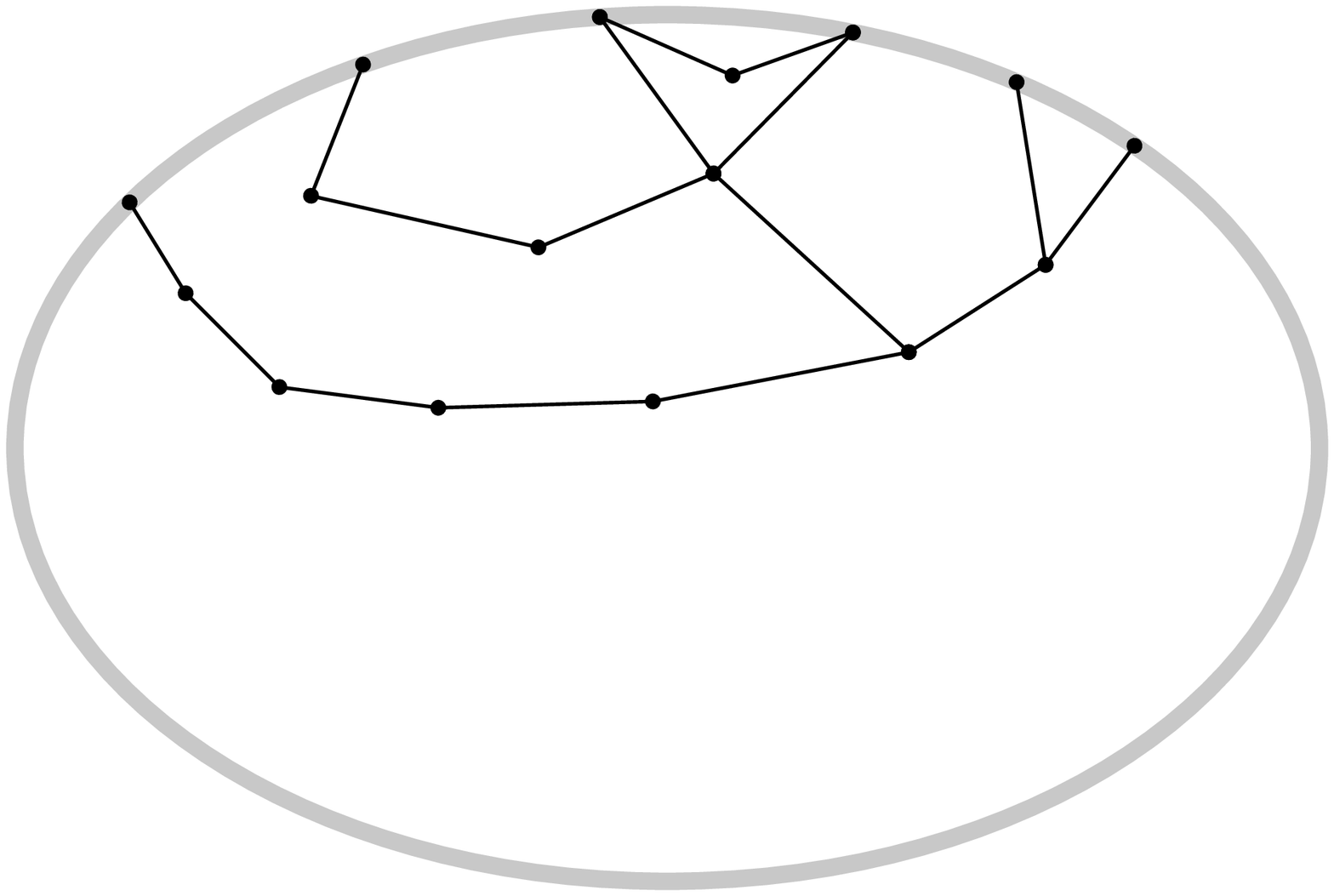}
\put(44,69){$s_1$}
\put(25,66){$t_1$}
\put(62,68){$s_2$}
\put(76,63){$t_2$}
\put(86,58){$s_3$}
\put(5,56){$t_3$}
\end{overpic}
\caption{$X_{3}$ in black\\$\mathbf{}$}\label{fig:TOPO_1}
\end{subfigure}
\qquad
	\begin{subfigure}{4.5cm}
\begin{overpic}[width=4.5cm,percent]{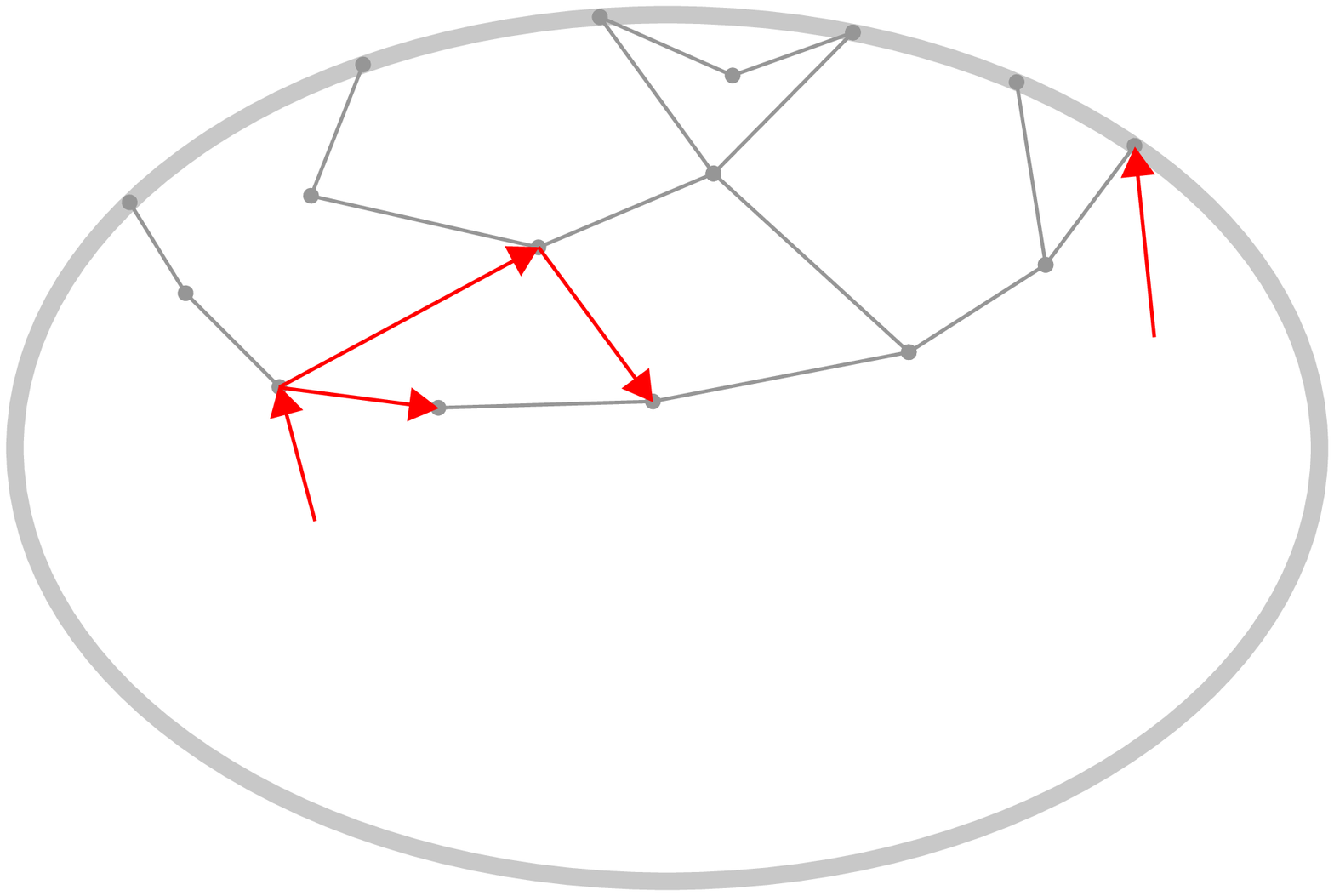}
\put(44,69){$s_1$}
\put(25,66){$t_1$}
\put(62,68){$s_2$}
\put(76,63){$t_2$}
\put(86,58){$s_3$}
\put(5,56){$t_3$}
\put(31,31.5){$x$}
\end{overpic}
\caption{$X_{3}$ in grey and the darts whose head is in $H_4$ in red}\label{fig:TOPO_2}
\end{subfigure}
\qquad
\par\bigskip
	\begin{subfigure}{4.5cm}
\begin{overpic}[width=4.5cm,percent]{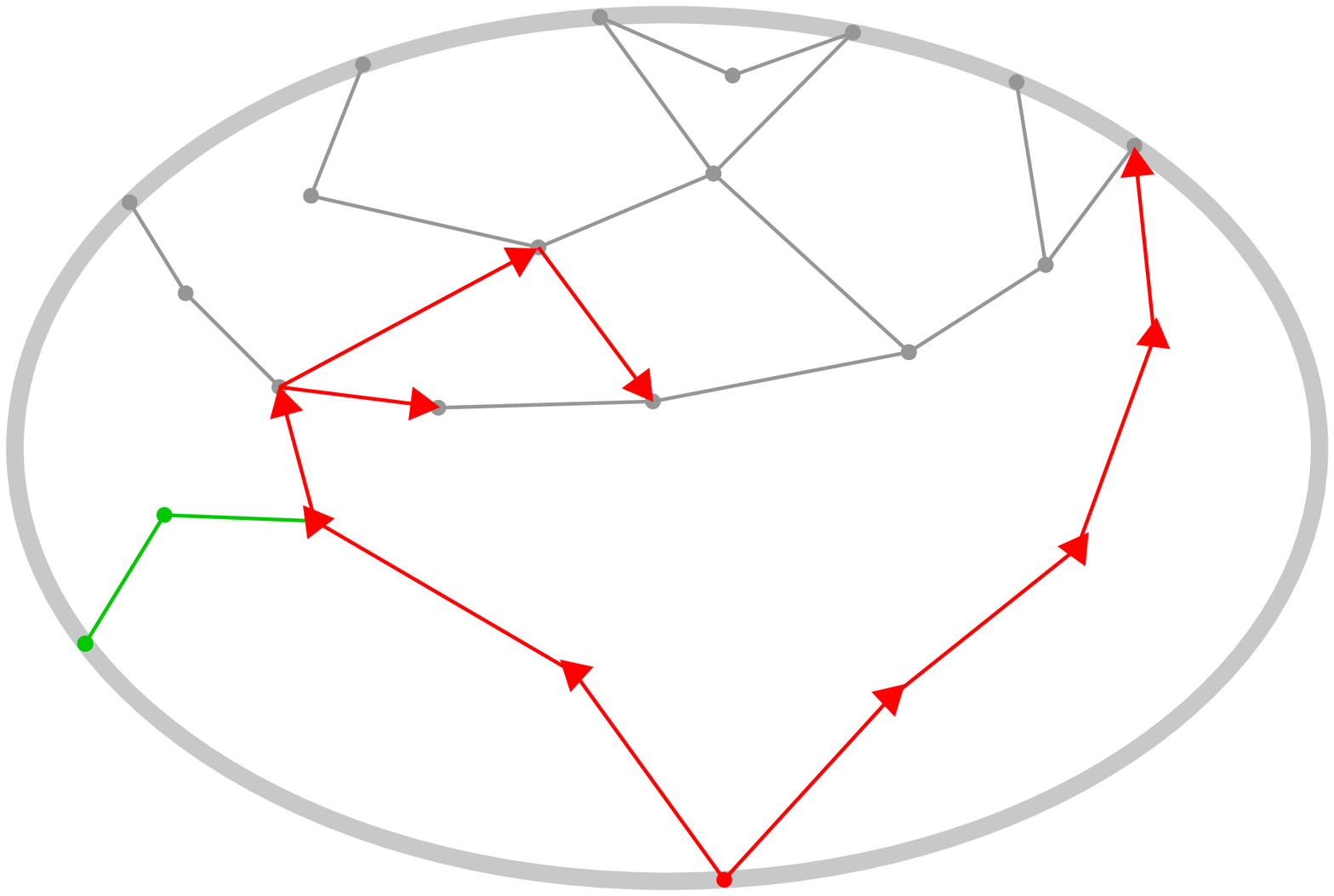}
\put(44,69){$s_1$}
\put(25,66){$t_1$}
\put(62,68){$s_2$}
\put(76,63){$t_2$}
\put(86,58){$s_3$}
\put(5,56){$t_3$}
\put(53,-4.5){$s_4$}
\put(3,12){$t_4$}
\end{overpic}
\par\medskip
\caption{$\bigcup_{h\in H_4}T_4[h]$ in red and $\eta_4$ in green}\label{fig:TOPO_3}
\end{subfigure}	
\qquad
	\begin{subfigure}{4.5cm}
\begin{overpic}[width=4.5cm,percent]{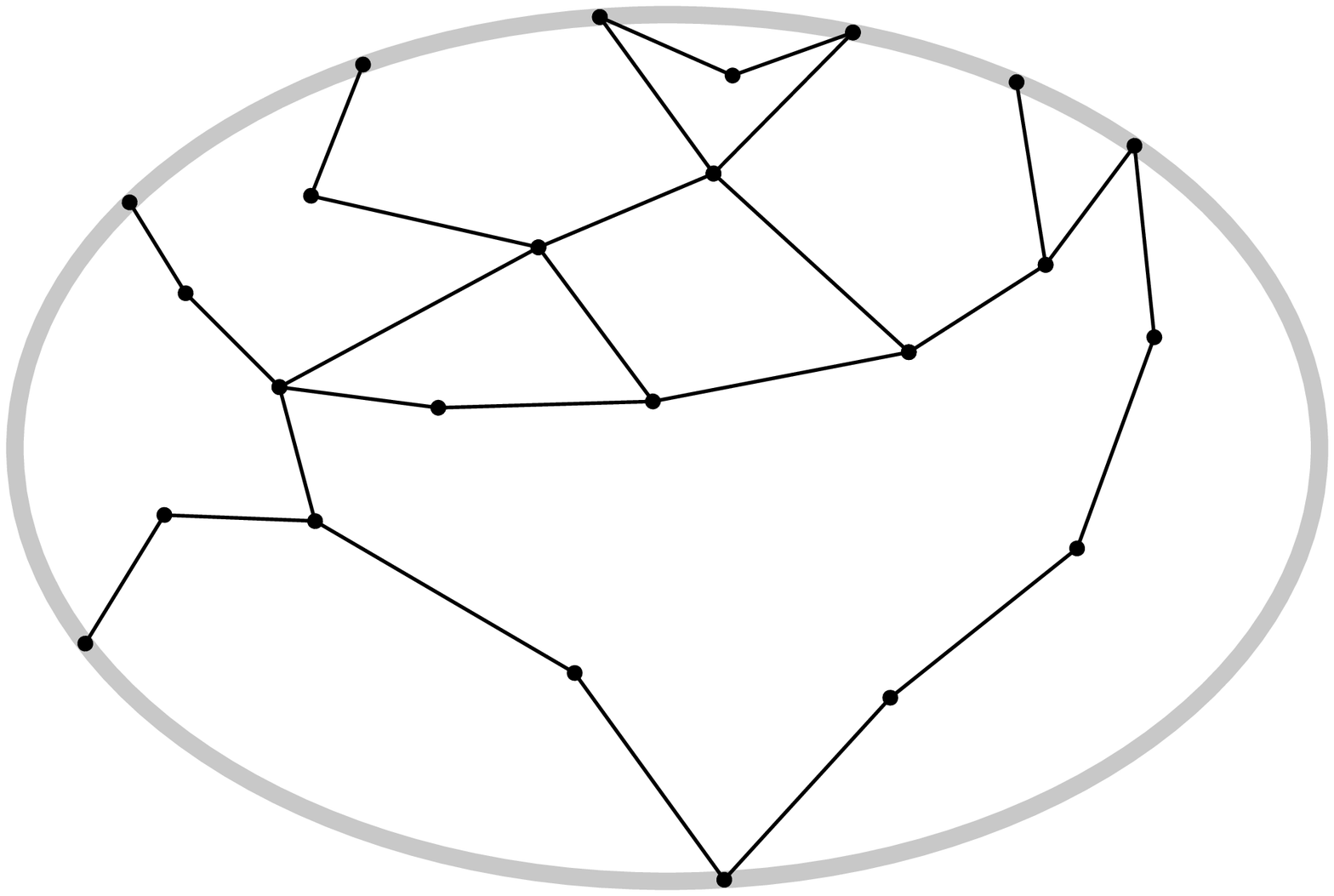}
\put(44,69){$s_1$}
\put(25,66){$t_1$}
\put(62,68){$s_2$}
\put(76,63){$t_2$}
\put(86,58){$s_3$}
\put(5,56){$t_3$}
\put(53,-4.5){$s_4$}
\put(3,12){$t_4$}
\end{overpic}
\par\medskip
\caption{$X_4$ in black\\$\mathbf{}$}\label{fig:TOPO_4}
\end{subfigure}	
  \caption{algorithm \TOPOLINO: graph $X_4$ is built starting from $X_3$.}
\label{fig:TOPO}
\end{figure}

Subgraphs  $\{X_i\}_{i\in[k]}$ built by algorithm \TOPOLINO, together with $f^\infty_G$, satisfy all the hypothesis of Theorem~\ref{prop:main}. Indeed, paths added in Line~\ref{line:d2} and Line~\ref{line:X+mu_i} are shortest paths in $G$ joining vertices in $X_{i-1}$, thus fulfilling Definition~\ref{def:ISP}. So, we exploit Theorem~\ref{prop:main} to prove that $X_i$ contains an $i$-path, for $i\in[k]$, and, in particular, $X_k$ contains a set of non-crossing paths $P=\{\pi_1,\ldots,\pi_k\}$, where $\pi_i$ is a shortest $i$-path, for $i\in[k]$. The main idea is to show that $X_i$ contains an undirected path that has the same length as the shortest $i$-path found by the algorithm by Eisenstat and Klein. This is proved in Theorem~\ref{th:TOPOLINO}.

Given a subgraph $X$ of $G$, we say that an $i$-path $p$ is the \emph{leftmost $i$-path in $X$} if for every $i$-path $q\subseteq X$ it holds $R_{p\circ\gamma_i}\subseteq R_{q\circ\gamma_i}$.

We say that an undirected $a$-$b$ path $p$ \emph{always turns left} if $p$ choose the leftmost edge, w.r.t. the fixed embedding, in each vertex going from $a$ to $b$.

\begin{theorem}\label{th:TOPOLINO}
Let $\pi_i$ be the leftmost $i$-path in $X_i$, for $i\in[k]$. The following hold:
\begin{enumerate}[label=\thetheorem.(\arabic*), ref=\thetheorem.(\arabic*),leftmargin=\widthof{1.(1)}+\labelsep]
\item\label{item:leftmost} $\pi_i$ is the $s_i$-$t_i$ path in $X_i$ that always turns left, for $i\in[k]$,
\item\label{item:shortest} $\pi_i$ is a shortest $i$-path, for $i\in[k]$,
\item\label{item:pi_i_non-croossing} for all $i,j\in[k]$, $\pi_i$ and $\pi_j$ are non-crossing.
\end{enumerate}
\end{theorem}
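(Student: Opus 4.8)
The theorem has three parts, and the natural plan is to establish them in the stated order, since each feeds the next. I'll sketch each.

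**Part (1): the leftmost path always turns left.**

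The plan is to show these two characterizations coincide. First I'd argue that the always-turns-left path exists and is unique: starting at $s_i$ and always taking the leftmost outgoing edge (relative to the edge just traversed, w.r.t. the embedding) is a deterministic walk. The key subtlety is that this walk must actually reach $t_i$ and stay within $X_i$ rather than looping. To establish the equivalence with leftmostness, I'd use the definition: $p$ is leftmost if $R_{p\circ\gamma_i}\subseteq R_{q\circ\gamma_i}$ for every $i$-path $q\subseteq X_i$. Taking the leftmost edge at each vertex is precisely the greedy choice that minimizes the enclosed region $R_{p\circ\gamma_i}$. I would make this rigorous by a standard exchange/planarity argument: if $p$ always turns left but some other $i$-path $q$ had $R_{q\circ\gamma_i}\subsetneq R_{p\circ\gamma_i}$, then at the first vertex where they diverge, $q$ would have to take an edge strictly to the left of $p$'s choice, contradicting that $p$ took the leftmost available edge in $X_i$.

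**Part (2): the leftmost path is shortest.**

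This is where Theorem~\ref{prop:main} does the heavy lifting, and I expect it to be the main obstacle. The strategy is to invoke that $X_i\cup f^\infty_G$ is an ISP subgraph (already noted in the text, via Definition~\ref{def:ISP}), so its faces enjoy the distance-preserving property. The delicate point is connecting the Eisenstat--Klein shortest-path-tree structure to $X_i$. I would argue: by construction of $\eta_i$ (Line~\ref{line:mu_i}) together with the paths added in Line~\ref{line:d2}, the path $\overline{T_i[t_i]}$ — the genuine shortest $i$-path in $G$ according to tree $T_i$ — is "captured" by $X_i$ in the sense that $X_i$ contains an $s_i$-$t_i$ path of the same length. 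Concretely, $\eta_i$ follows $T_i$ backward from $t_i$ until hitting $X_i$ at some vertex $z$; I then need that the $s_i$-to-$z$ portion already in $X_i$ can be completed to a shortest $s_i$-$t_i$ path. Here I'd apply the Corollary: $z$ and $s_i$ lie in a common face region, so a shortest $s_i$-$z$ path stays inside, and concatenating with $\eta_i$ gives a shortest $i$-path contained in $X_i$. Since $\pi_i$ is the leftmost (hence shortest-possible within $X_i$) such path, and a shortest $i$-path of $G$ lives in $X_i$, $\pi_i$ attains that length. The careful bookkeeping — verifying $z$'s face membership and that the lengths match up against the $T_i$ distances — is the real work.

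**Part (3): pairwise non-crossing.**

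Finally I would establish non-crossing using the genealogy tree and the leftmost characterization from Part (1). The plan is: two leftmost paths $\pi_i,\pi_j$ can only interact according to the partial order $\prec$. If $\gamma_i\cap\gamma_j=\emptyset$ their regions $R_{\pi_i\circ\gamma_i}$ and $R_{\pi_j\circ\gamma_j}$ are disjoint or nested consistently with $\gamma_i,\gamma_j$, so no crossing arises. If $i\prec j$ (or vice versa), nestedness of $\gamma_i\subset\gamma_j$ plus the fact that both paths always turn left forces their enclosed regions to nest as well. The crucial lemma is that two always-turns-left paths in the same embedded graph cannot cross: if they crossed transversally at a vertex $v$, then at $v$ one path would take an edge to the left of the other, but after crossing the roles would reverse, contradicting the consistent leftmost rule. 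I'd formalize this via the rotation system. The main care here is handling shared subpaths — where $\pi_i$ and $\pi_j$ overlap along edges — and showing that touching without transversal crossing is permitted by the definition of non-crossing, so only genuine transversal crossings need be excluded.

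**Summary of obstacles.** Part (1) is essentially definitional once uniqueness of the leftmost walk is pinned down. Part (3) reduces to a clean planarity lemma about left-turning paths plus the genealogy structure. The heart of the argument — and the step I'd allocate the most care to — is Part (2), specifically proving that $X_i$ genuinely contains a shortest $i$-path of $G$, which requires marrying the Eisenstat--Klein tree distances with the distance-preserving guarantee of Theorem~\ref{prop:main}.
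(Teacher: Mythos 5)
Your overall architecture (prove (1), use it for (2) and (3)) matches the paper's, but part (2) — which you correctly identify as the crux — contains the decisive gaps. First, your parenthetical claim that the leftmost path is ``hence shortest-possible within $X_i$'' does not follow from anything you establish: leftmostness is defined by region containment, $R_{\pi_i\circ\gamma_i}\subseteq R_{q\circ\gamma_i}$, which a priori says nothing about length — the leftmost path could hug the boundary and be far longer than other $i$-paths in $X_i$. Proving that the leftmost path is no longer than $\overline{T_i[t_i]}$ is exactly what the paper's argument supplies: it adds a dummy $s_i$-$t_i$ path $q$ of huge length in the external face to form $G'$, takes the cycle $C=\pi_i\circ q$, observes that $C$ bounds a face of the resulting ISP subgraph and that $\overline{T_i[t_i]}\downarrow C=\pi_i$ (by leftmostness, every excursion of a path of $X_i$ beyond $\pi_i$ projects back onto subpaths of $\pi_i$), and only then invokes Theorem~\ref{prop:main} to conclude $|\pi_i|\leq|\overline{T_i[t_i]}|$. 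Without this projection step you have no inequality connecting $|\pi_i|$ to any shortest path. Second, your route to ``a shortest $i$-path of $G$ lives in $X_i$'' misuses the corollary to Theorem~\ref{prop:main}: that corollary produces a shortest $s_i$-$z$ path contained in the \emph{region} $R_{\partial f}$, and $R_{\partial f}$ contains all vertices and edges of $G$ embedded inside the face $f$, not only edges of $X_i$; so the path it gives need not lie in $X_i$, and neither need its concatenation with $\eta_i$. (You also never justify that $s_i$ and $z$ share a face.) The fact you actually need is simpler and purely constructive, and is what the paper uses: Lines~\ref{line:d2} and~\ref{line:X+mu_i} already guarantee $\overline{T_i[t_i]}\subseteq X_i$, i.e., the Eisenstat--Klein shortest path itself is a subgraph of $X_i$, with no distance-preservation argument required.

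There is also a smaller but real gap in part (1): you name the key difficulty — that the left-turning walk from $s_i$ actually terminates in $t_i$ — but never discharge it, spending your effort instead on the essentially definitional equivalence with leftmostness. The paper's proof of~\ref{item:leftmost} is precisely this termination argument: the walk ends either in $t_i$ or in a degree-one vertex of $X_i$; by Remark~\ref{remark:degree_1} and the construction such a vertex is a terminal $s_\ell$ or $t_\ell$ with $\ell<i$; well-formedness and the clockwise processing order exclude such terminals from $V(\gamma_i)\setminus\{s_i,t_i\}$, and $\overline{T_i[t_i]}\subseteq X_i$ prevents the walk from escaping to $V(f^\infty_G)\setminus V(\gamma_i)$. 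Your part (3), by contrast, is essentially the paper's argument (a crossing would contradict $\pi_j$ always turning left in $X_j\supseteq X_i$), and is fine once (1) is in place.
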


\begin{proof} 
\begin{itemize}
\item\textbf{\ref{item:leftmost}} For convenience, for every $i\in[k]$, let $\lambda_i$ be the undirected path on $X_i$ that starts in $s_i$ and always turns left until  it reaches either $t_i$ or a vertex $x$ of degree one in $X_i$; we observe that $\lambda_i$ is well defined and, by Remark~\ref{remark:degree_1}, $x\in f^\infty_G$. We have to prove that $\lambda_i=\pi_i$.

Let $i\in[k]$. First, we observe that $s_i\in X_i$ because $s_{i-1}\in H_i$, thus, by Line~\ref{line:d2}, $\overline{T_i[s_{i-1}]}\subseteq X_i$. This implies $s_i\in X_i$ as we have claimed.

Let $x$ be the extremal vertex of $\lambda_i$ other than $s_i$. Assume by contradiction that $x\neq t_i$. Two cases are possible: either $x\in V(f^\infty_G)\setminus V(\gamma_i)$ or $x\in V(\gamma_i)\setminus\{t_i\}$.

The first case cannot occur because Line~\ref{line:d2} and Line~\ref{line:X+mu_i} imply $\overline{T_i[t_i]}\subseteq X_i$, thus $\lambda_i$ would cross $\eta_i$, absurdum. In the second case, let us assume by contradiction that $x\in V(\gamma_i)\setminus\{t_i\}$. Let $d\in\lambda_i$ be the dart such that $\head[d]=x$. By definition of $\lambda_i$, vertex $x$ has degree one in $X_i$. By Line~\ref{line:1_compute_pi_1}, Line~\ref{line:d2} and Line~\ref{line:X+mu_i}, all vertices with degree one are equal to either $s_\ell$ or $t_\ell$, for some $\ell\in[k]$, and this implies that there exists $j<i$ such that $x\in\{s_j,t_j\}$. This is absurdum because there is not $s_j$ or $t_j$ in $V(\gamma_i)\setminus\{s_i,t_i\}$ such that $j<i$. Hence $\lambda_i$ is an $i$-path, and, by its definition, $\lambda_i$ is the leftmost $i$-path in $X_i$. Therefore $\lambda_i=\pi_i$.

\item\textbf{\ref{item:shortest}} We prove that $\pi_i$ is a shortest $i$-path by using Theorem~\ref{prop:main}, indeed, $X_i\cup f^\infty_G$ is an ISP subgraph of $G$ by construction. Let $G'$ be the graph obtained from $G$ by adding a dummy path $q$ from $s_i$ to $t_i$ in $f^\infty_G$ with high length (for example,  $|q|=|E(G)|$). Let $C$ be the cycle $\pi_i\circ q$. We observe that $\overline{T_i[t_i]}\downarrow C=\pi_i$ and $C$ is the boundary of a face of $G'$. Thus, by Theorem~\ref{prop:main}, $|\pi_i|\leq|\overline{T_i[t_i]}|$. Since $\overline{T_i[t_i]}$ is a shortest path, then $\pi_i$ is a shortest path in $G'$, hence it also is a shortest path in $G$.

\item\textbf{\ref{item:pi_i_non-croossing}} Let us assume by contradiction that there exist $i,j\in[k]$ such that $\pi_i$ and $\pi_j$ are crossing, with $i<j$. Thus $\pi_j$ has not turned always left in $X_j$, absurdum.\qed
\end{itemize}
\end{proof}

\subsection{Algorithm \MINNIE}\label{sec:computational_complexity}

The graph $X_k$ given by the algorithm \TOPOLINO contains a shortest path for each terminal pair, but $X_k$ may also contain edges that do not belong to any shortest path. To overcome this problem we apply algorithm \MINNIE, that builds a directed graph $Y_k=\bigcup_{i\in[k]}\rho_i$, where $\rho_i$ is a directed shortest $i$-path, for $i\in[k]$. Moreover, we prove that $Y_k$ can be built in linear time. This implies that, by using the results in~\cite{err_giappo}, we can compute the length of all shortest $i$-paths, for $i\in[k]$, in $O(n)$ worst-case time (see Theorem~\ref{th:main}). 

We use the sequence of subgraphs $\{X_i\}_{i\in[k]}$. By Theorem~\ref{th:TOPOLINO}, we know that $X_i$ contains a shortest undirected $i$-path $\pi_i$ and we can list its edges in $O(|\pi_i|)$ time. But if an edge $e$ is shared by many $\pi_i$'s, then $e$ is visited many times. Thus obtaining $\bigcup_{i\in[k]}\pi_i$ by this easy procedure requires $O(kn)$ worst-case time. To overcome this problem, we should visit every edge in $\bigcup_{i\in[k]} \pi_i$ only a constant number of times.

Now we introduce two useful lemmata the will be used later. The first lemma shows that two uncomparable directed paths $\pi_i$ and $\pi_j$ (i.e., such that $i \not\prec j$ and $j \not\prec i$) in the genealogy tree $T_G$ cannot share a dart, although it is possible that $\dd{ab}\in\pi_i$ and $\dd{ba}\in\pi_j$. The second lemma deals with the intersection of non-crossing paths joining comparable pairs.

\begin{lemma}\label{lemma:preparatore_fratelli}(\cite{err_giappo})
Let $\pi_i$ be a shortest directed $i$-path and let $\pi_j$ be a shortest directed $j$-path, for some $i,j\in[k]$. If $j$ is not an ancestor neither a descendant of $i$ in $T_G$, then $\pi_i$ and $\pi_j$ have no common darts.
\end{lemma}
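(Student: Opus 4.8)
The plan is to argue by contradiction, turning a shared dart into a planarity obstruction that is governed entirely by the cyclic order of the four terminals on $\esterna$. First I would record what incomparability buys us: since $j$ is neither an ancestor nor a descendant of $i$, well-formedness forces $\gamma_i\cap\gamma_j=\emptyset$, so, after the orientation normalisation fixed by $e^*$, the two arcs are disjoint intervals once $\esterna$ is cut at $e^*$, each running from its source to its sink. Hence the four terminals appear around $\esterna$ in the alternating pattern $s_i,t_i,s_j,t_j$ (sources and sinks interleave). Now assume, for contradiction, that $\dd{ab}\in\pi_i\cap\pi_j$, and let $\sigma$ be the maximal common directed subpath of $\pi_i$ and $\pi_j$ containing $\dd{ab}$, say from $u$ to $v$; by maximality the darts of $\pi_i,\pi_j$ entering $u$, and those leaving $v$, are pairwise distinct.

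Next I would isolate the four ``tentacles'' $\alpha_i=\pi_i[s_i,u]$, $\alpha_j=\pi_j[s_j,u]$, $\beta_i=\pi_i[v,t_i]$, $\beta_j=\pi_j[v,t_j]$. The decisive structural fact is that, because both paths traverse $\sigma$ in the \emph{same} direction $u\to v$, both sources are wired to the $u$-end of $\sigma$ (via $\alpha_i,\alpha_j$) and both sinks to the $v$-end (via $\beta_i,\beta_j$). The goal is then to read off which cyclic orders of $s_i,s_j,t_i,t_j$ on $\esterna$ are compatible with this bundling.

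The key step, which I expect to be the main obstacle, is a rotation-system computation on the configuration $\sigma\cup\alpha_i\cup\alpha_j\cup\beta_i\cup\beta_j$: a subdivided ``H'' whose two sources meet at $u$ and whose two sinks meet at $v$. I would show that in any planar embedding its four boundary leaves are forced into a cyclic order in which $\{s_i,s_j\}$ appear consecutively and $\{t_i,t_j\}$ appear consecutively, never the alternating order $s_i,t_i,s_j,t_j$. Intuitively the trunk $\sigma$ separates the two $u$-leaves from the two $v$-leaves, so walking the outer face meets the sources as a block and the sinks as a block; enumerating the admissible rotations at $u$ and at $v$ confirms that every achievable cyclic order keeps sources grouped and sinks grouped, contradicting the interleaved order dictated by $\gamma_i\cap\gamma_j=\emptyset$. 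This is exactly where orientation enters: in the configuration allowed by the lemma, $\dd{ab}\in\pi_i$ and $\dd{ba}\in\pi_j$, the common subpath is traversed in \emph{opposite} directions, so the leaves bundled at $u$ are $\{s_i,t_j\}$ and those at $v$ are $\{t_i,s_j\}$; in the order $s_i,t_i,s_j,t_j$ the pairs $\{s_i,t_j\}$ and $\{t_i,s_j\}$ are each cyclically adjacent, so the bundling is realisable, matching the lemma's explicit exception.

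To make this rigorous I would handle two technical points. The reduction to a tree is not automatic: maximality of $\sigma$ makes the tentacles internally disjoint from $\sigma$, but $\alpha_i,\alpha_j,\beta_i,\beta_j$ may still meet one another. I would circumvent this by applying the Jordan curve theorem to the simple curve $\pi_i$, which splits the disk bounded by $\esterna$ so that $s_j$ and $t_j$ lie in the same component, namely the one not containing $\gamma_i$, and then tracking on which side of $\sigma$ each tentacle leaves $u$ and $v$. Formalising this side bookkeeping through the Heffter--Edmonds--Ringel rotation principle is the most delicate part, but it is precisely the ingredient that ties the contradiction to the same-direction hypothesis, and thereby yields the claim that $\pi_i$ and $\pi_j$ share no dart.
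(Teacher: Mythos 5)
Your opening reductions are fine: for incomparable $i,j$ well-formedness does give $\gamma_i\cap\gamma_j=\emptyset$, hence the clockwise order $s_i,t_i,s_j,t_j$ on $\esterna$. The gap is your ``key step''. The rotation-system obstruction you want --- that any planar embedding of $\sigma\cup\alpha_i\cup\alpha_j\cup\beta_i\cup\beta_j$ forces $s_i,s_j$ (and $t_i,t_j$) to be cyclically consecutive --- is false, and the reason is visible in your own proposal: you never use the hypothesis that $\pi_i$ and $\pi_j$ are \emph{shortest} paths, yet the lemma fails for arbitrary simple paths. Concretely: place $s_i,t_i,s_j,t_j$ clockwise on the outer cycle, add interior vertices $x,u,v$ and edges $s_ix,\,xu,\,uv,\,vt_i,\,s_ju,\,vx,\,xt_j$, embedded so that the edge pairs $\{s_ix,xu\}$ and $\{vx,xt_j\}$ interleave in the rotation at $x$. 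Then $\pi_i=s_i\,x\,u\,v\,t_i$ and $\pi_j=s_j\,u\,v\,x\,t_j$ are simple, planar, and both traverse the dart $\dd{uv}$: the tentacle $\beta_j=\pi_j[v,t_j]$ simply crosses the tentacle $\alpha_i=\pi_i[s_i,u]$ at $x$, so the configuration is not a tree and the interleaved leaf order is perfectly realizable. No Jordan-curve ``side bookkeeping'' can exclude this, because the obstruction is metric, not topological (in this example $\mathrm{dist}(x,u)$ would have to be $1$ along $\pi_i$ and $\w(uv)+\w(vx)$ along $\pi_j$, so the two paths cannot both be shortest --- which is exactly the content of the lemma).

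This is precisely how the paper argues, and it is the ingredient you are missing. Take $d$ to be the first dart of $\pi_i\cap\pi_j$ along $\pi_i$; a planarity argument (the region $R$ bounded by $\uuu{\pi_i[s_i,\tail(d)]}$, $\uuu{\pi_j[s_j,\tail(d)]}$ and the clockwise $s_i$-$s_j$ arc of $\esterna$) forces the simple path $\pi_j$, after traversing $d$, to meet $\pi_i[s_i,\tail(d)]$ at some vertex $x$, since $t_j$ lies outside $R$ and $\pi_j$ can cross neither $\esterna$ nor itself. So $x$ lies \emph{before} $d$ on $\pi_i$ but \emph{after} $d$ on $\pi_j$, and shortestness now gives the contradiction: from $\pi_i$, $\mathrm{dist}(x,\head(d))=\mathrm{dist}(x,\tail(d))+\w(d)$, while from $\pi_j$, $\mathrm{dist}(\tail(d),x)=\w(d)+\mathrm{dist}(\head(d),x)$; by symmetry of undirected distances these two identities sum to $0=2\w(d)$, which is absurd. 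Note that the same-direction hypothesis enters here, in the opposite relative orders of $x$ and $d$ on the two paths (with $\dd{ab}\in\pi_i$ and $\dd{ba}\in\pi_j$ the orders are consistent and no contradiction arises), not in any rotation count. If you wanted to salvage your tree picture you would first have to prove that tentacles of two shortest paths cannot cross, but that claim is essentially the lemma itself, and proving it requires the distance argument above.
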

\begin{proof}
Let us assume by contradiction that $\pi_i$ and $\pi_j$ have some common darts, and let $d$ be the dart in $\pi_i\cap\pi_j$ that appears first in $\pi_i$. Let $R$ be the region bounded by $\uuu{\pi_j[s_j,\tail(d)]}$, $\uuu{\pi_i[s_i,\tail(d)]}$ and the clockwise undirected $s_i-s_j$ path in $f^\infty$ 
(Figure~\ref{fig:directed_non-crossing}(\subref{fig:directed_non-crossing_1}) shows $\pi_i$, $\pi_j$ and $R$).
Being $\pi_j$ a simple path, then $\pi_j$ crosses $\pi_i$ in at least one vertex in $\pi_i[s_i,\tail(d)]$. Let $x$ be the first vertex in $\pi_i[s_i,\tail(d)]$ after $\head(d)$ in $\pi_j$. Now by looking to the cycle $\pi_i[x,\head(d)]\circ\pi_j[\head(d),x]$, it follows that $\pi_i$ and $\pi_j$ can be both shortest paths, absurdum (Figure~\ref{fig:directed_non-crossing}(\subref{fig:directed_non-crossing_2}) shows this cycle).\qed
\end{proof}

\begin{lemma}\label{lemma:preparatore_2}(\cite{err_giappo})
Let $\{\pi_i\}_{i\in[k]}$ be a set of non-crossing directed paths. Let $i,j\in[k]$, if $i$ is a descendant of $j$, then $\pi_i\cap\pi_j\subseteq\pi_\ell$, for all $\ell\in[k]$ such that $i\prec \ell\prec j$.
\end{lemma}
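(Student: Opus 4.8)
The plan is to convert the combinatorial condition $i\prec\ell\prec j$ into a nesting of planar regions and then pin down the shared dart on a common boundary. Recall that $i\prec\ell\prec j$ means $\gamma_i\subset\gamma_\ell\subset\gamma_j$. For each $m\in\{i,\ell,j\}$ let $R_{\pi_m\circ\gamma_m}$ be the region bounded by the cycle formed by the path $\pi_m$ together with the external arc $\gamma_m$ (exactly the region used in the definition of leftmost $i$-path). The core step I would establish is the region nesting
$$R_{\pi_i\circ\gamma_i}\subseteq R_{\pi_\ell\circ\gamma_\ell}\subseteq R_{\pi_j\circ\gamma_j}.$$

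To prove $R_{\pi_i\circ\gamma_i}\subseteq R_{\pi_\ell\circ\gamma_\ell}$ I would argue topologically. The endpoints $s_i,t_i$ of $\pi_i$ lie on $\gamma_i\subseteq\gamma_\ell$, hence on the $\gamma_\ell$-portion of the boundary of $R_{\pi_\ell\circ\gamma_\ell}$. The cycle $\pi_\ell\circ\gamma_\ell$ is a Jordan curve splitting the plane into $R_{\pi_\ell\circ\gamma_\ell}$ and an outer region; the only graph elements incident to $\gamma_\ell$ on the outer side belong to $f^\infty_G$, since the external face sits on that side. Because $\pi_i$ and $\pi_\ell$ are non-crossing and $\pi_i$ cannot cross the boundary arc $\gamma_\ell\subseteq f^\infty_G$, the path $\pi_i$ never crosses $\partial R_{\pi_\ell\circ\gamma_\ell}$; being a curve joining two points of $\gamma_\ell$, it therefore lies entirely on one side, and it must be the $R_{\pi_\ell\circ\gamma_\ell}$ side, as no interior edge of $G$ lies beyond $\gamma_\ell$. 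Thus $\pi_i$ lies in the closed region $R_{\pi_\ell\circ\gamma_\ell}$, and together with $\gamma_i\subseteq\gamma_\ell$ this places the whole boundary $\partial R_{\pi_i\circ\gamma_i}$ inside it, giving the first inclusion. The second inclusion is obtained verbatim with the pair $(\ell,j)$ replacing $(i,\ell)$.

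With the nesting in hand, fix a dart $d\in\pi_i\cap\pi_j$. From $d\in\pi_i\subseteq\partial R_{\pi_i\circ\gamma_i}$ and the first inclusion, $d$ lies in the closed region $R_{\pi_\ell\circ\gamma_\ell}$. From $d\in\pi_j\subseteq\partial R_{\pi_j\circ\gamma_j}$ together with $R_{\pi_\ell\circ\gamma_\ell}\subseteq R_{\pi_j\circ\gamma_j}$, the dart $d$ cannot lie in the \emph{interior} of $R_{\pi_\ell\circ\gamma_\ell}$, because the boundary of the larger region is disjoint from the interior of the smaller one. Combining the two facts forces $d\in\partial R_{\pi_\ell\circ\gamma_\ell}=\pi_\ell\cup\gamma_\ell$, which already yields $d\in\pi_\ell$ unless $d$ happens to run along the external arc $\gamma_\ell$; the analogous statement for the (few) isolated common vertices of $\pi_i\cap\pi_j$ then follows by taking endpoints of these darts.

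I expect the main obstacle to be precisely the residual case of a shared dart lying on the boundary $\gamma_\ell$. To dispose of it I would use the same sandwiching once more: such a dart is carried by both $\pi_i$ and $\pi_j$ and lies on the common sub-arc $\gamma_i\subseteq\gamma_\ell\subseteq\gamma_j$, and the non-crossing condition, applied to $\pi_\ell$ against $\pi_i$ and $\pi_j$, forces $\pi_\ell$ to traverse the identical stretch of boundary, so that $d\in\pi_\ell$ holds here too. Beyond this boundary bookkeeping the argument is entirely topological, and the genuinely delicate point throughout is justifying the ``stays on one side'' step of the Jordan-curve argument directly from the combinatorial rotation-system definition of non-crossing paths rather than from an appeal to a picture.
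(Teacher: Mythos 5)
Your strategy (nesting the regions $R_{\pi_i\circ\gamma_i}\subseteq R_{\pi_\ell\circ\gamma_\ell}\subseteq R_{\pi_j\circ\gamma_j}$ and then sandwiching a shared dart between boundaries) is genuinely different from the paper's proof, which is local rather than global: the paper takes the \emph{first} shared dart $e$ along $\pi_i$, forms the single region $Q$ bounded by $\uuu{\pi_j[s_j,\tail(e)]}$, $\uuu{\pi_i[s_i,\tail(e)]}$ and the clockwise external $s_j$-$s_i$ arc, and argues that $\pi_\ell$, which starts at $s_\ell\in\partial Q$ and ends at $t_\ell\notin Q$, can only leave $Q$ through the pinch at $\tail(e)$, i.e.\ by using $e$ itself. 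Your sandwich step is sound as far as it goes: granting the nesting, a dart of $\pi_i\cap\pi_j$ lies in the closed region $R_{\pi_\ell\circ\gamma_\ell}$ but not in its interior, hence on $\pi_\ell\cup\gamma_\ell$. The difficulties are in the other two steps.

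First, your justification of the nesting is not tight: the outside of the Jordan curve $\pi_\ell\circ\gamma_\ell$ is not only the infinite face, it also contains all of $G$ lying beyond $\pi_\ell$, so ``no interior edge of $G$ lies beyond $\gamma_\ell$'' does not exclude the wrong side; moreover $\pi_i$ may pass through $s_\ell$ or $t_\ell$, and touching an endpoint of $\pi_\ell$ is not a crossing, so $\pi_i$ can switch sides there without violating any hypothesis. Second, and decisively, the residual case cannot be closed the way you claim. Non-crossing does \emph{not} force $\pi_\ell$ to traverse a dart of $\gamma_\ell$ shared by $\pi_i$ and $\pi_j$: let $\pi_i$ run along $\gamma_i$, let $\pi_j$ run along $\gamma_j$ (so both traverse the darts of $\gamma_i\subseteq\gamma_\ell$), and let $\pi_\ell$ be an internal $s_\ell$-$t_\ell$ path avoiding $\gamma_\ell$. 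These three paths are pairwise non-crossing (they only overlap or meet at endpoints of one of them, never alternate sides), and by making the portion of $f^\infty$ outside $\gamma_j$ long enough they can all be taken to be shortest paths, yet $\pi_i\cap\pi_j\not\subseteq\pi_\ell$. So no argument using only the stated hypotheses can repair your boundary case. (This configuration is in fact also a blind spot of the paper's own one-line justification --- there $Q$ degenerates to an arc with empty interior --- and it shows the lemma is really meant to be applied to the specific leftmost paths built in Theorem~\ref{th:TOPOLINO}, which hug $f^\infty$ and would themselves contain the shared darts; but unlike the paper's argument, yours hangs the entire boundary case on a forcing claim that is demonstrably false.)
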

\begin{proof}
Let us assume $\pi_i\cap\pi_j\neq\emptyset$ and choose  $\ell\in[k]$ such that $i \prec\ell \prec j$. 
Let $e$ be the dart in $\pi_i\cap\pi_j$ that appears first in $\pi_i$ and let $Q$ be the region bounded by $\uuu{\pi_j[s_j,\tail(e)]}$, $\uuu{\pi_i[s_i,\tail(e)]}$ and the clockwise undirected $s_j-s_i$ path in $f^\infty$ 
(a region $Q$ and dart $e$ are shown in Figure~\ref{fig:directed_non-crossing}(\subref{fig:directed_non-crossing_3})).
It is clear that if $e\not\in\pi_\ell$, then $\{\pi_i,\pi_j,\pi_\ell\}$ is not a set of non-crossing paths, absurdum.\qed
\end{proof}

\begin{figure}[h]
\captionsetup[subfigure]{justification=centering}
\centering
	\begin{subfigure}{4cm}
\begin{overpic}[width=4cm,percent]{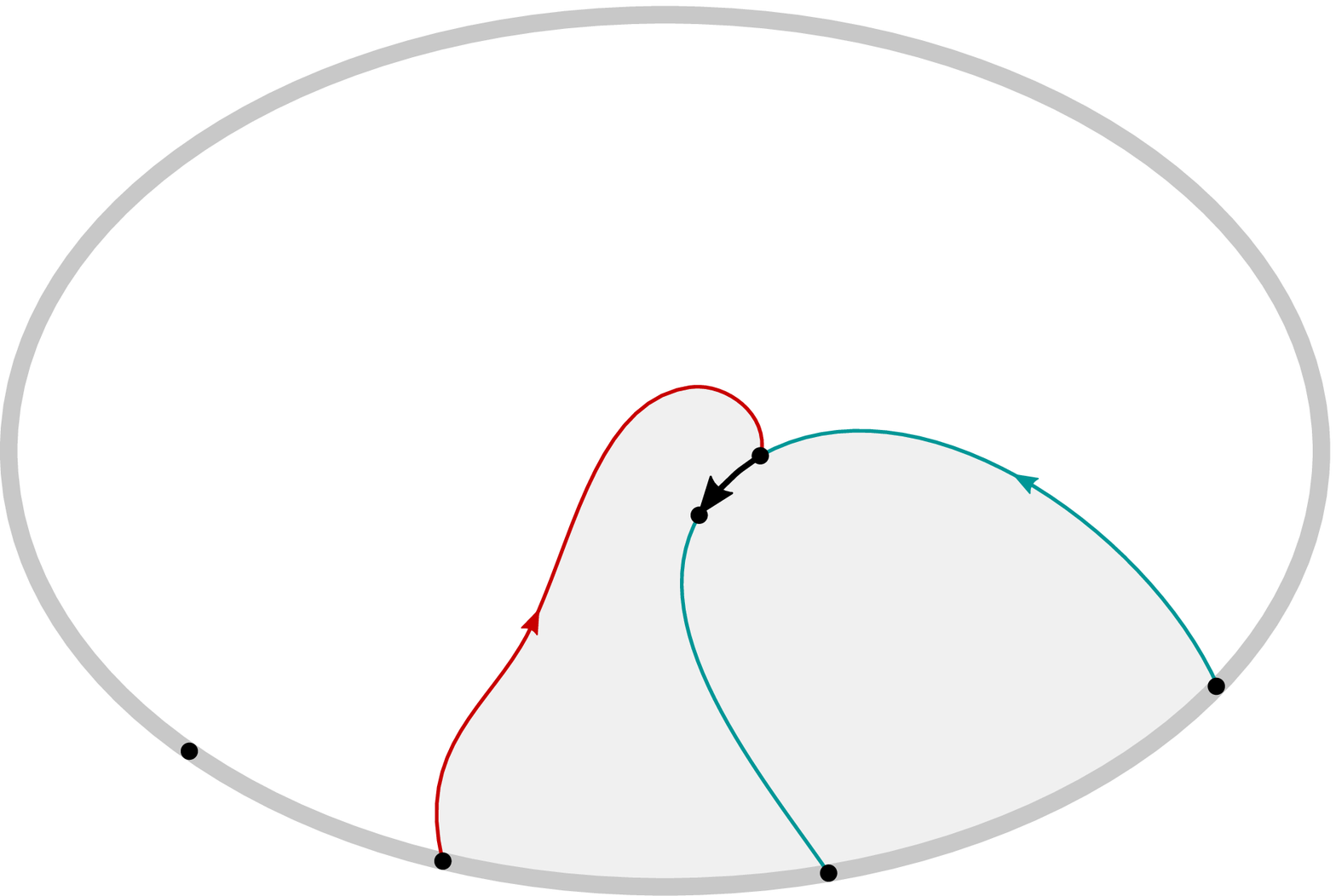}
\put(55,26){$d$}
\put(43,12){$R$}
\put(91,10){$s_i$}
\put(61,-5.5){$t_i$}
\put(31,-4){$s_j$}
\put(12,3.5){$t_j$}
\end{overpic}
\caption{}\label{fig:directed_non-crossing_1}
\end{subfigure}
\qquad
	\begin{subfigure}{4cm}
\begin{overpic}[width=4cm,percent]{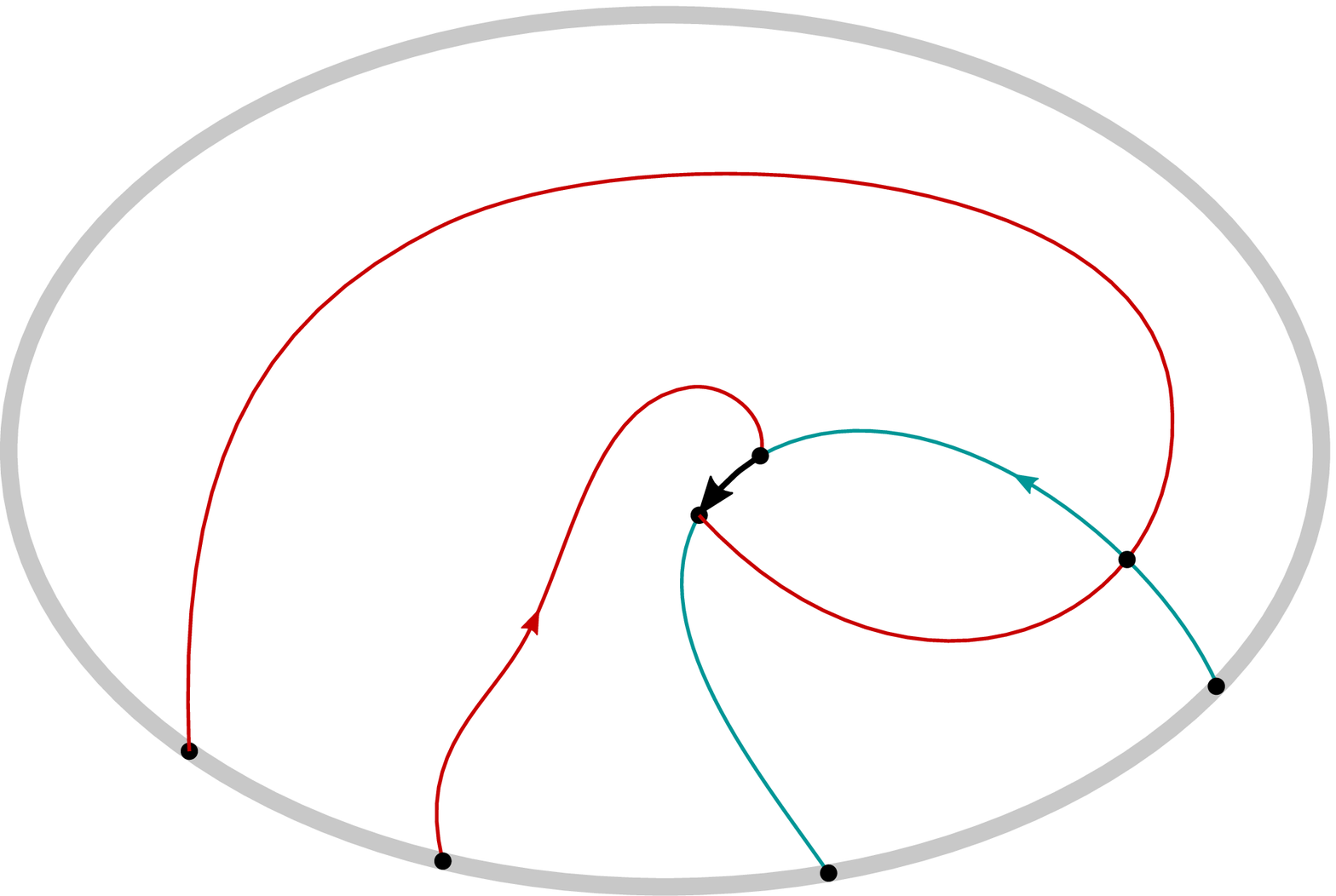}
\put(87,24){$x$}
\put(91,10){$s_i$}
\put(61,-5.5){$t_i$}
\put(31,-4){$s_j$}
\put(12,3.5){$t_j$}
\end{overpic}
\caption{}\label{fig:directed_non-crossing_2}
\end{subfigure}
\qquad
\begin{subfigure}{4cm}
\begin{overpic}[width=4cm,percent]{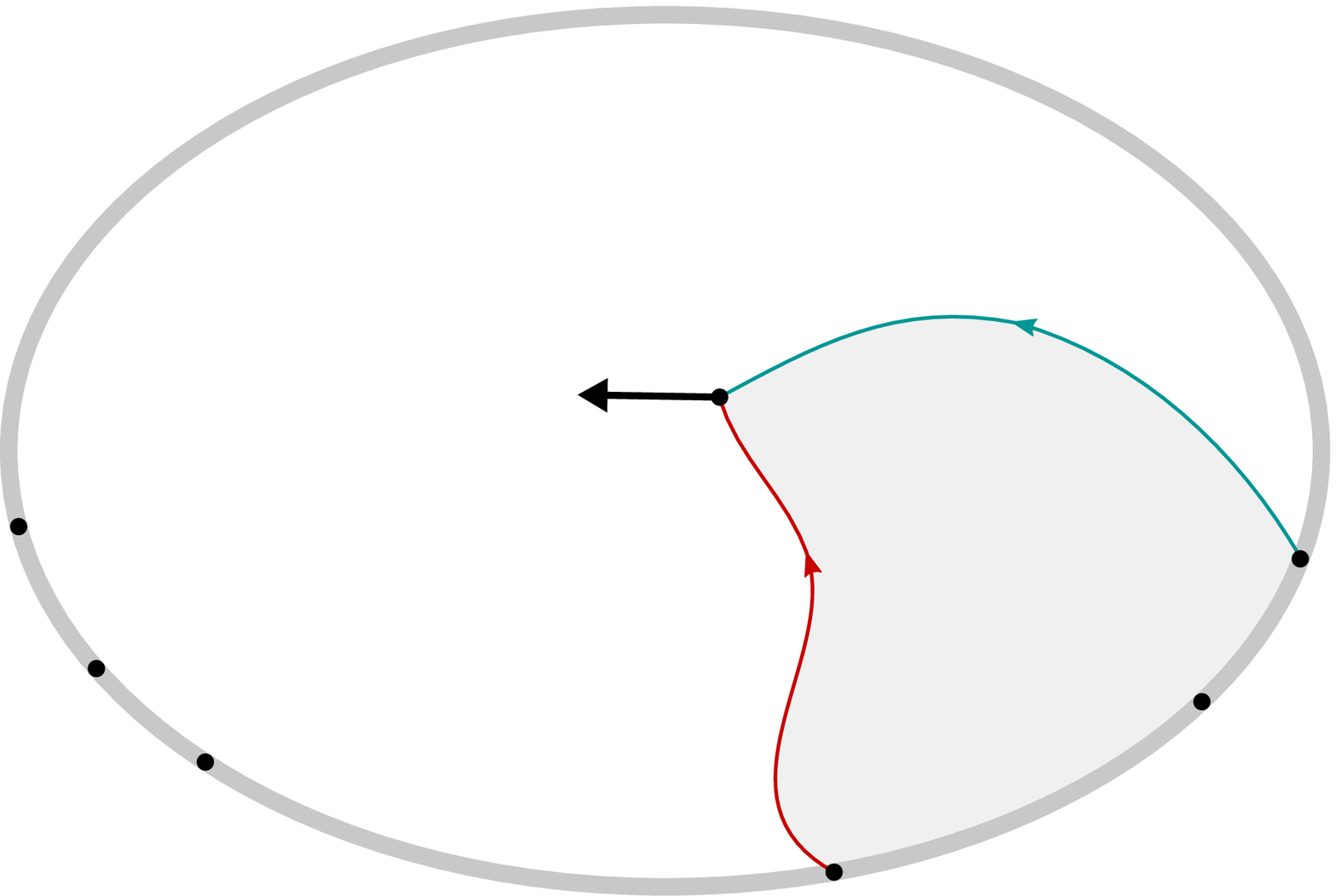}
\put(73,24){$Q$}
\put(48,39.5){$e$}
\put(61.5,-4){$s_i$}
\put(13.5,2){$t_i$}
\put(90,9.5){$s_\ell$}
\put(5,9){$t_\ell$}  
\put(97.9,21){$s_j$}
\put(-4.5,22){$t_j$}
\end{overpic}
\caption{}\label{fig:directed_non-crossing_3}
\end{subfigure}
 \caption{in (\subref{fig:directed_non-crossing_1}) and (\subref{fig:directed_non-crossing_2}) the paths $\pi_j$ and $\pi_i$, the dart $d$, the region $R$ and the vertex $x$ used in the proof of Lemma~\ref{lemma:preparatore_fratelli}. In (\subref{fig:directed_non-crossing_3}) the region $Q$ and the dart $d$ in the proof of Lemma~\ref{lemma:preparatore_2}.}
\label{fig:directed_non-crossing}
\end{figure}


Now we show how to use these two lemmata for our goals. Let $\rho_i$ be a shortest directed $i$-path and let $\rho_j$ be a shortest directed $j$-path, for some $i,j\in[k]$, $i\neq j$. By Lemma~\ref{lemma:preparatore_fratelli}, if $i$ and $j$ are not comparable in $T_G$, then $\rho_i$ and $\rho_j$ have no common darts. Moreover, by Lemma~\ref{lemma:preparatore_2}, if $i$ is an ancestor of $j$ in $T_G$, then $\rho_i\cap\rho_j\subseteq\rho_{p(j)}$. By using these two facts, in order to list darts in $\rho_{i}$,  then it suffices to find darts in $\rho_i\setminus\rho_{p(i)}$,
for all $i \in [k] \setminus \{1\}$ (we remind that 1 is the root of $T_G$). To this goal we use algorithm \MINNIE, that builds a sequence of directed graphs $\{Y_i\}_{i\in[k]}$ such that $Y_k$ is equal to $\bigcup_{i\in[k]}\rho_i$, where $\rho_i$ is a shortest directed $i$-path, for $i\in[k]$.

We prove the correctness of algorithm \MINNIE in Theorem~\ref{th:MINNIE}. At iteration $i$ we compute $\rho_i\setminus\rho_{p(i)}$, showing that $\rho_i\setminus\rho_{p(i)}=\sigma_i\cup\rev[\tau_i]$, where $\sigma_i$ and $\tau_i$ are computed in Line~\ref{line:sigma_i} and Line~\ref{line:tau_i}, respectively. We observe that if $\rho_i\cap\rho_{p(i)}=\emptyset$, then $\sigma_i=\rev[\tau_i]=\rho_i$.

\begin{figure}
\begin{algorithm}[H] 
\SetAlgorithmName{Algorithm \texttt{NCSPunion}}{}{}
\renewcommand{\thealgocf}{}
 \caption{}
 \KwIn{an undirected unweighted planar embedded graph $G$ and $k$ well-formed terminal pairs of vertices $(s_i,t_i)$, for $i\in[k]$, on the external face of $G$}
 \KwOut{a directed graph $Y_k$ formed by the union of directed shortest non-crossing paths from $s_i$ to $t_i$, for $i\in[k]$}
{Compute $X_1$ as in algorithm \TOPOLINO\;
$Y_1$ is the directed version of $X_1$ oriented from $s_1$ to $t_1$\label{line:1_compute_pi_1}\;
\For{$i=2,\ldots,k$}{
Compute $X_i$ as in algorithm \TOPOLINO\;
$\sigma_i$ is the directed path that starts in $s_i$ and always turns left in $X_i$ until either $\sigma_i$ reaches $t_i$ or the next dart $d_i$ of $\sigma_i$ satisfies $d_i\in Y_{i-1}$\label{line:sigma_i}\;
$\tau_i$ is the directed path that starts in $t_i$ and always turns right in $X_i$ until either $\tau_i$ reaches $s_i$ or the next dart $d_i'$ of $\tau_i$ satisfies $\rev[d_i']\in Y_{i-1}$\label{line:tau_i}\;
$Y_i=Y_{i-1}\cup\sigma_i\cup \rev[\tau_i]$\label{line:Y}\;
}
}
\end{algorithm}
\end{figure}

\begin{theorem}\label{th:main}
Algorithm \MINNIE has $O(n)$ worst-case time complexity.
\end{theorem}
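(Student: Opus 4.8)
The plan is to split the running time of \MINNIE into three contributions and bound each by $O(n)$: (i) the construction of the sequence $\{X_i\}_{i\in[k]}$, (ii) the left/right traversals that produce $\sigma_i$ and $\rev[\tau_i]$, and (iii) the per-step bookkeeping (testing dart membership in $Y_{i-1}$, selecting the leftmost/rightmost dart, and forming $Y_i=Y_{i-1}\cup\sigma_i\cup\rev[\tau_i]$). For (i) I would run \MINNIE and \TOPOLINO with interleaved loops: at iteration $i$ one first performs the $i$-th step of \TOPOLINO, obtaining $X_i$ from $X_{i-1}$ by inserting only the new edges into a single growing graph, and then computes $\sigma_i,\tau_i$. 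Since $X_{i-1}\subseteq X_i$, no graph is ever rebuilt from scratch, and by Lemma~\ref{lemma:TOPOLINO_O(n)} the whole construction of $\{X_i\}_{i\in[k]}$ costs $O(n)$ in total.

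The heart of the proof is (ii), and the key is a charging argument showing that every dart of $G$ belongs to at most one $\sigma_i$ and to at most one $\tau_i$; this is exactly what prevents the naive $\Theta(kn)$ blow-up caused by shared subpaths. First I observe that $\{Y_i\}$ is monotone, since Line~\ref{line:Y} gives $Y_{i-1}\subseteq Y_i$. By the stopping rule of Line~\ref{line:sigma_i}, the traversal defining $\sigma_i$ halts before including any dart already in $Y_{i-1}$; hence every dart of $\sigma_i$ lies outside $Y_{i-1}$, while every dart of $\sigma_i$ lies in $Y_i$. Consequently, if a dart $d$ belongs to $\sigma_i$ then $d\in Y_i\subseteq Y_{j-1}$ for every $j>i$, so the stopping rule forbids $d\in\sigma_j$; thus $d$ lies in at most one $\sigma_i$. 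The symmetric argument applies to $\tau_i$: by Line~\ref{line:tau_i} every dart $d'\in\tau_i$ satisfies $\rev[d']\notin Y_{i-1}$, whereas $\rev[d']\in\rev[\tau_i]\subseteq Y_i$, so for $j>i$ we get $\rev[d']\in Y_{j-1}$ and Line~\ref{line:tau_i} forbids $d'\in\tau_j$. Since each $\sigma_i$ (a prefix of the simple leftmost $i$-path $\pi_i$ of Theorem~\ref{th:TOPOLINO}) and each $\tau_i$ repeats no dart internally, $\sum_{i\in[k]}\bigl(|\sigma_i|+|\tau_i|\bigr)$ counts every dart of $G$ at most twice; as $G$ has $O(n)$ edges, hence $O(n)$ darts, this sum is $O(n)$. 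Each traversal also inspects one extra dart, namely the one triggering its stop, adding $O(k)$; and because the arcs $\gamma_i$ form a laminar family of distinct intervals on the external face we have $k=O(n)$, so this term is $O(n)$ too.

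For (iii) I would keep a boolean array indexed by darts that records membership in the current $Y_{i-1}$, so that the two stopping tests of Lines~\ref{line:sigma_i} and~\ref{line:tau_i} cost $O(1)$ per inspected dart; Line~\ref{line:Y} then updates these flags only for the darts of $\sigma_i$ and of $\rev[\tau_i]$, in $O(|\sigma_i|+|\tau_i|)$ time, because $Y_{i-1}\subseteq Y_i$ makes the union amount to merely marking the new darts. To make ``always turn left/right in $X_i$'' cost $O(1)$ per step, I would maintain, for each vertex, the circular doubly-linked list of its incident $X_i$-darts in the rotation order inherited from the fixed embedding of $G$; arriving at a vertex on dart $\dd{uv}$, the next leftmost (resp.\ rightmost) dart is the successor (resp.\ predecessor) of $\rev[\dd{uv}]$ in that list. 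These lists are extended together with $X_i$ during the interleaved \TOPOLINO steps. Combining (i)--(iii) yields the claimed $O(n)$ bound.

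The main obstacle is the charging argument of (ii): one must check that the two stopping rules, together with the monotonicity $Y_{i-1}\subseteq Y_i$, genuinely prevent any dart from being re-traversed, so that $\sum_i(|\sigma_i|+|\tau_i|)$ collapses to $O(n)$ rather than $O(kn)$. The only other delicate point is the data-structural one of maintaining the induced rotation systems of the $X_i$'s so that each turn is decided in constant time. Note that correctness of the decomposition $\rho_i\setminus\rho_{p(i)}=\sigma_i\cup\rev[\tau_i]$ (Theorem~\ref{th:MINNIE}) is \emph{not} needed for the running time, which follows purely from the stopping rules and from monotonicity.
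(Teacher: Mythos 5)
Your proposal is correct and takes essentially the same approach as the paper: bound the cost of building $\{X_i\}_{i\in[k]}$ by Lemma~\ref{lemma:TOPOLINO_O(n)}, then argue that \MINNIE touches each dart of the directed version of $X_k$ only $O(1)$ times, which gives $O(n)$ since $X_k$ is a subgraph of $G$. The paper merely asserts this last claim, whereas your charging argument---monotonicity $Y_{i-1}\subseteq Y_i$ combined with the stopping rules of Lines~\ref{line:sigma_i} and~\ref{line:tau_i}---spells out exactly why no dart is ever re-traversed, so your write-up is a more detailed rendering of the same proof.
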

\begin{proof}
Algorithm \MINNIE uses algorithm \TOPOLINO, that requires $O(n)$ worst-case time by Lemma~\ref{lemma:TOPOLINO_O(n)}. Moreover, algorithm \MINNIE visits each dart of  the ``directed version" of $X_k$ at most $O(1)$ times, where the \emph{directed version of} $X_k$ is the directed graph built from $X_k$ by replacing each edge $ab$ by the pair of darts $\dd{ab}$ and $\dd{ba}$. Thus, algorithm \MINNIE requires $O(n)$ worst-case time, since $X_k$ is a subgraph of $G$.\qed
\end{proof}

\begin{theorem}\label{th:MINNIE}
Graph $Y_k$ computed by algorithm \MINNIE is the union of $k$ shortest non-crossing $i$-paths, for $i\in[k]$.
\end{theorem}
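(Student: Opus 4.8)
The plan is to set $\rho_i$ to be the directed version of the leftmost path $\pi_i$ from Theorem~\ref{th:TOPOLINO}, oriented from $s_i$ to $t_i$. By \ref{item:shortest} each $\rho_i$ is a shortest $i$-path and by \ref{item:pi_i_non-croossing} the family $\{\rho_i\}_{i\in[k]}$ is pairwise non-crossing, so it suffices to prove $Y_k=\bigcup_{i\in[k]}\rho_i$; I will establish the stronger statement $Y_i=\bigcup_{m\le i}\rho_m$ by induction on $i$, using the indexing convention built into the genealogy tree rooted at $1$ that every ancestor of $i$ has index smaller than $i$ and every descendant has index larger than $i$, so in particular $p(i)<i$. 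The base case $Y_1=\rho_1$ is immediate, since $X_1=\overline{T_1[t_1]}$ is a single path and $Y_1$ is its orientation from $s_1$ to $t_1$.

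For the inductive step I would first identify which darts of $\rho_i$ already lie in $Y_{i-1}=\bigcup_{m<i}\rho_m$. Writing $\rho_i\cap Y_{i-1}=\bigcup_{m<i}(\rho_i\cap\rho_m)$, I split the indices $m<i$ by their relation to $i$ in $T_G$. No $m<i$ is a descendant of $i$, by the indexing convention; for $m$ incomparable to $i$, Lemma~\ref{lemma:preparatore_fratelli} gives $\rho_i\cap\rho_m=\emptyset$; and for $m$ a strict ancestor of $i$, Lemma~\ref{lemma:preparatore_2} applied with $\ell=p(i)$ (noting $i\prec p(i)\preceq m$) yields $\rho_i\cap\rho_m\subseteq\rho_{p(i)}$. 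Hence $\rho_i\cap Y_{i-1}\subseteq\rho_{p(i)}$, and since $\rho_{p(i)}\subseteq Y_{i-1}$ the reverse inclusion is trivial, giving $\rho_i\cap Y_{i-1}=\rho_i\cap\rho_{p(i)}$.

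Next I would show that $\sigma_i$ and $\rev[\tau_i]$ reconstruct exactly $\rho_i\setminus\rho_{p(i)}$. By \ref{item:leftmost} the always-left walk from $s_i$ in $X_i$ traces $\rho_i$, so $\sigma_i$ is the prefix of $\rho_i$ ending just before the first dart of $\rho_i$ belonging to $Y_{i-1}$; symmetrically the always-right walk from $t_i$ is $\rev[\rho_i]$, and its stopping test $\rev[d_i']\in Y_{i-1}$ inspects the forward dart of $\rho_i$, so $\rev[\tau_i]$ is the suffix of $\rho_i$ starting just after the last such dart. Provided the shared darts $\rho_i\cap\rho_{p(i)}$ form a single contiguous subpath $\rho_i[u,v]$, we obtain $\sigma_i=\rho_i[s_i,u]$ and $\rev[\tau_i]=\rho_i[v,t_i]$, whence $\sigma_i\cup\rev[\tau_i]=\rho_i\setminus\rho_i[u,v]=\rho_i\setminus\rho_{p(i)}$ and therefore $Y_i=Y_{i-1}\cup\rho_i=\bigcup_{m\le i}\rho_m$; when $\rho_i\cap\rho_{p(i)}=\emptyset$ both walks run to the opposite terminal and $\sigma_i=\rev[\tau_i]=\rho_i$, as noted before the theorem.

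The main obstacle is precisely this contiguity (single-touch) property: I must rule out that $\rho_i$ and $\rho_{p(i)}$ separate and later meet again, which would make $\sigma_i$ stop too early and $\rev[\tau_i]$ start too late, so that darts of $\rho_i$ lying strictly between two shared blocks would be dropped from $Y_i$. I would derive it from the leftmost characterization \ref{item:leftmost}, the nesting $X_{p(i)}\subseteq X_{i-1}\subseteq X_i$ (the $X$'s are increasing by construction), and non-crossingness \ref{item:pi_i_non-croossing}: a gap between two shared darts would bound a lens whose two sides are distinct shortest paths both contained in $X_i$, and, since the leftmost turn in the larger graph $X_i$ is never to the right of the one available in $X_{p(i)}$, tracking the turns where the two paths split and rejoin contradicts either the leftmostness of $\pi_i$ in $X_i$ or the absence of a crossing. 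Making this lens/exchange analysis airtight is the delicate part; once it is in place, the remainder is bookkeeping over the genealogy tree, and the non-crossing conclusion is inherited verbatim from Theorem~\ref{th:TOPOLINO}.\qed
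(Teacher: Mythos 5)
Your reduction of the theorem to one missing claim is accurate, and the bookkeeping part of your argument is sound: using Lemma~\ref{lemma:preparatore_fratelli} for indices incomparable with $i$ and Lemma~\ref{lemma:preparatore_2} (with $\ell=p(i)$) for strict ancestors, you correctly obtain $\rho_i\cap Y_{i-1}=\rho_i\cap\rho_{p(i)}$. The problem is that the claim you isolate as ``the delicate part'' --- that the darts shared by $\dd{\pi_i}$ and $\dd{\pi_{p(i)}}$ form a single contiguous subpath --- is left unproven, and the argument you sketch for it does not go through. A ``lens'', where $\pi_i$ leaves $\pi_{p(i)}$ at a vertex $a$ and rejoins it at a vertex $b$, contradicts neither leftmostness nor non-crossingness: since $X_{p(i)}\subseteq X_i$ and the inclusion can be strict, the always-turn-left walk in $X_i$ may at $a$ take an edge of $X_i\setminus X_{p(i)}$ lying strictly to the left of the edge taken by $\pi_{p(i)}$ (which is leftmost only within $X_{p(i)}$), run along a second shortest $a$-$b$ path, and rejoin $\pi_{p(i)}$ at $b$; the two $a$-$b$ subpaths have equal length (both are subpaths of shortest paths with the same endpoints), they do not cross, and each path remains leftmost in its own graph. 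So nothing in Theorem~\ref{th:TOPOLINO} rules the lens out; and if a lens occurred, your $\sigma_i$ would stop at $a$, your $\rev[\tau_i]$ would start at $b$, and the darts of $\dd{\pi_i}$ strictly inside the lens would be lost from $Y_i$: by your own identity they are in neither $\sigma_i$, nor $\rev[\tau_i]$, nor $Y_{i-1}$. This is a genuine gap, not a technicality to be smoothed out.

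The paper's proof avoids this obstacle rather than overcoming it: it does \emph{not} set $\rho_i=\dd{\pi_i}$. Instead it defines $\rho_i$ recursively, as in \eqref{eq:rho_i}, by rerouting $\dd{\pi_i}$ along the parent path $\rho_{p(i)}$ between the first and last vertices $u_i,v_i$ that $\dd{\pi_i}$ shares with $\rho_{p(i)}$. With this definition the intersection $\rho_i\cap\rho_{p(i)}$ is contiguous by construction --- exactly the property you are missing --- and the inductive burden shifts to three easier facts: the rerouted paths are still shortest (the two $u_i$-$v_i$ subpaths have equal length), they are still pairwise non-crossing, and the stopping darts $d_i$ and $\rev[d_i']$ of Lines~\ref{line:sigma_i} and~\ref{line:tau_i} lie in $\rho_{p(i)}$, which is proved with the same two lemmata you invoke. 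If you want to keep your approach you must actually prove the single-touch property for the leftmost paths themselves, which appears to need more than the properties stated in Theorem~\ref{th:TOPOLINO}; adopting the rerouted definition of $\rho_i$ is the cleaner fix.
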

\begin{proof}
Let $\{\pi_i\}_{i\in[k]}$ be the set of paths defined in Theorem~\ref{th:TOPOLINO}. For all $i\in[k]$, we denote by $\dd{\pi_i}$ the directed version of $\pi_i$, oriented from $s_i$ to $t_i$.

First we define $\rho_1=\dd{\pi_1}$ and for all $i\in[k]\setminus\{1\}$ we define
\begin{equation}
\label{eq:rho_i}
\rho_i=
\begin{cases}
\dd{\pi_i}[s_i,u_i]\circ \rho_{p(i)}[u_i,v_i]\circ \dd{\pi_i}[v_i,t_i], &\text{ if }\,\, \dd{\pi_i}\cap\rho_{p(i)}\neq\emptyset,\\
\dd{\pi_i}, &\text{ otherwise},
\end{cases}
\end{equation}
where we assume that if $V(\dd{\pi_i}\cap\rho_{p(i)})\neq\emptyset$, then $u_i$ and $v_i$ are the vertices in $V(\dd{\pi_i}\cap\rho_{p(i)})$ that appear first and last  in $\dd{\pi_i}$, respectively; the definition of $\rho_i$ as in \eqref{eq:rho_i} is shown in Figure~\ref{fig:eq_rho_i}. Now we split the proof into three parts: first we prove that $\{\rho_i\}_{i\in[k]}$ is a set of shortest paths (we need it to apply Lemma~\ref{lemma:preparatore_fratelli}); second we prove that $\{\rho_i\}_{i\in[k]}$ is a set of non-crossing paths (we need it to apply Lemma~\ref{lemma:preparatore_2}); third we prove that $Y=\bigcup_{i\in[k]}\rho_i$ (we prove it by Lemma~\ref{lemma:preparatore_fratelli} and Lemma~\ref{lemma:preparatore_2}).

\begin{itemize}
\item \textbf{$\{\rho_i\}_{i\in[k]}$ is a set of shortest paths:} we proceed by induction on $i$. The base case is trivial because $\pi_1$ is a shortest path by definition. Let us assume that $\rho_j$ is a shortest $j$-path, for $j<i$, we have to prove that $\rho_i$ is a shortest $i$-path. If $\dd{\pi_i}\cap\rho_{p(i)}=\emptyset$, then 
$\rho_i=\dd{\pi_i}$ by \eqref{eq:rho_i}, thus the thesis holds because $\{\pi_i\}_{i\in[k]}$ a set of shortest paths. Hence let us assume that $\dd{\pi_i}\cap\rho_{p(i)}\neq\emptyset$, then it suffices, by definition of $\rho_i$, that $|\pi_i[u_i,v_i]|=|\rho_{p(i)}[u_i,v_i]|$. It is true by induction.

\item \textbf{$\{\rho_i\}_{i\in[k]}$ is a set of non-crossing paths:} we proceed by induction on $i$. The base case is trivial because there is only one path. Let us assume that $\{\rho_j\}_{j\in[i-1]}$ is a set of non-crossing paths, we have to prove that $\rho_i$ does not cross $\rho_j$, for any $j<i$. 

If $\rho_i$ and $\rho_j$ are crossing and $j$ is not an ancestor of $i$, then, by construction of $\rho_i$, either $\rho_{p(i)}$ and $\rho_j$ are crossing or $\pi_i$ and $\pi_j$ are crossing; that is absurdum in both cases by induction and Theorem~\ref{th:TOPOLINO}. Moreover, by definition, $\rho_i$ does not cross $\rho_{p(i)}$, and by induction, if $\ell$ is an ancestor of $i$ such that $\ell\neq p(i)$, then $\rho_i$ does not cross $\rho_\ell$, indeed, if not, then $\rho_\ell$ would cross $\rho_{p(i)}$, absurdum. Hence $\{\rho_i\}_{i\in[k]}$ is a set of non-crossing paths.

\item \textbf{$Y$ is the union of $\rho_i$'s:} now we prove that $Y=\bigcup_{i\in[k]}\rho_i$. In particular we show that $\rho_1=\dd{\pi_1}$ and for all $i\in[k]\setminus\{1\}$ 

\begin{equation}
\label{eq:rho_i_2}
\rho_i=
\begin{cases}
\sigma_i\circ \rho_{p(i)}[u_i,v_i]\circ\rev[\tau_i], &\text{ if }\,\, \dd{\pi_{i}}\cap\rho_{p(i)}\neq\emptyset,\\
\dd{\pi_i}, &\text{ otherwise}.
\end{cases}
\end{equation}

\noindent
Again, we proceed by induction on $i$. The base case is trivial, thus we assume that (\ref{eq:rho_i}) is equivalent to (\ref{eq:rho_i_2}) for all $i<\ell$. We have to prove that (\ref{eq:rho_i}) is equivalent to (\ref{eq:rho_i_2}) for $i=\ell$.

If $\dd{\pi_\ell}$ does not intersect any dart of $\rho_{p(\ell)}$, then \eqref{eq:rho_i} is equivalent to \eqref{eq:rho_i_2}. Thus we assume that $\dd{\pi_\ell}\cap\rho_{p(\ell)}\neq\emptyset$. By (\ref{eq:rho_i}) and (\ref{eq:rho_i_2}) and by definition of $\sigma_i$ and $\tau_i$ in Line~\ref{line:sigma_i} and Line~\ref{line:tau_i}, respectively, it suffices to prove that $d_i\in \rho_{p(i)}$ and $\rev[d_i']\in\rho_{p(i)}$. 

Now, by induction we know that $d_i\in\rho_\ell$ for some $\ell<i$, we have to show that $d_i\in \rho_{p(i)}$. By Lemma~\ref{lemma:preparatore_fratelli} and being $\{\rho_j\}_{j\in[k]}$ a set of shortest paths, it holds that $\ell$ is an ancestor or a descendant of $i$. Being the $s_j$'s visited clockwise by starting from $s_1$, then $\ell$ is an ancestor of $i$. Finally, by Lemma~\ref{lemma:preparatore_2} and being $\{\rho_j\}_{j\in[k]}$ a set of non-crossing path, it holds that $\rho_i\cap\rho_\ell\subseteq\rho_{p(i)}$. Being $p(i)<i$, then $d_i\in\rho_{p(i)}$ as we claimed. By a similar argument, it holds that $\rev[d_i']\in\rho_{p(i)}$.\qed
\end{itemize}
\end{proof}

\begin{figure}[h]
\captionsetup[subfigure]{justification=centering}
\centering
	\begin{subfigure}{4.5cm}
\begin{overpic}[width=4.5cm,percent]{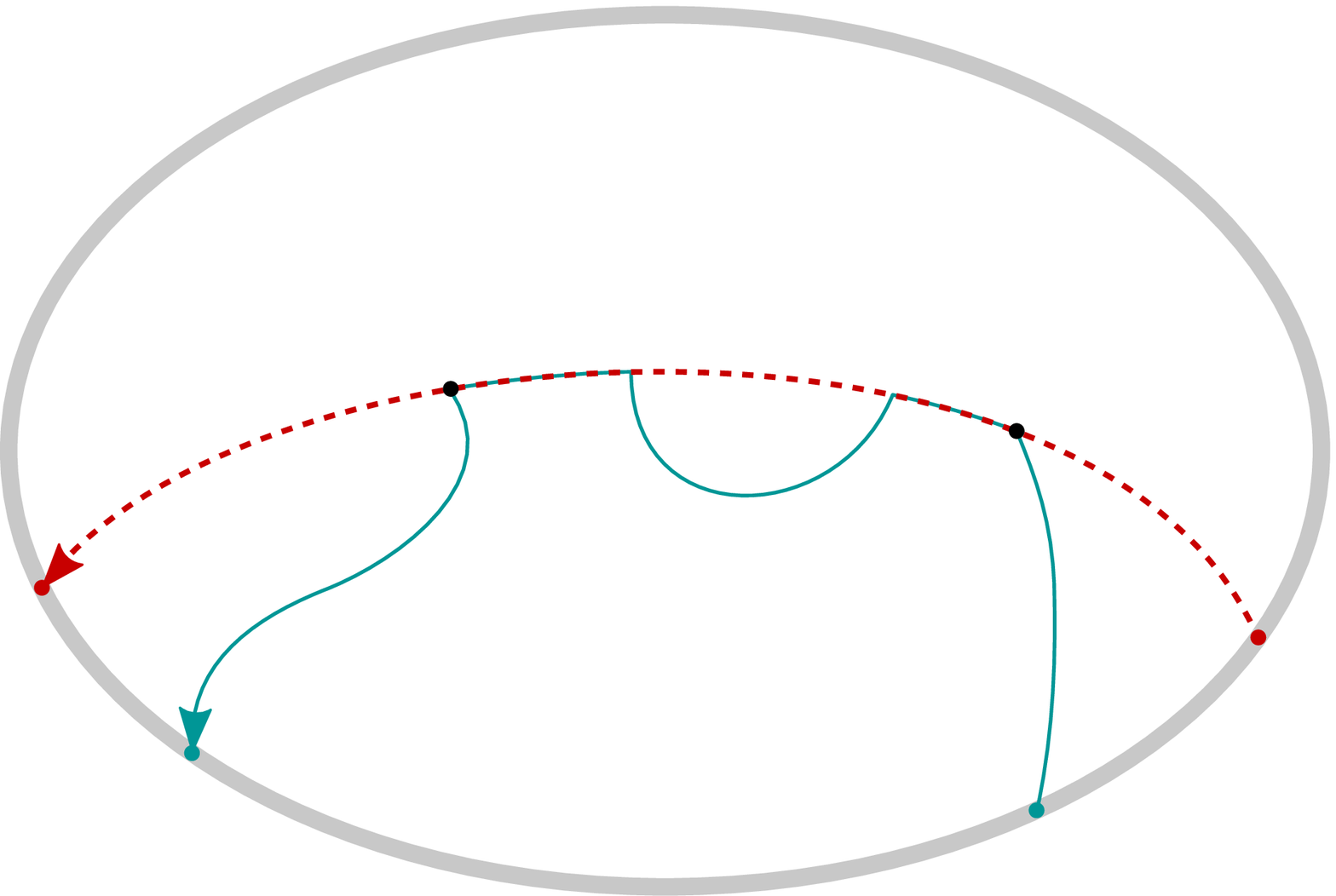}
\put(76,1){$s_i$}
\put(13,4){$t_i$}
\put(95,15.2){$s_{p(i)}$}
\put(-8,17.5){$t_{p(i)}$}
\put(51,21.5){$\dd{\pi_i}$}
\put(51,43){$\dd{\pi}_{\!\!p(i)}$}

\put(74.5,37){$u_i$}
\put(30.5,40.5){$v_i$}
\end{overpic}
\caption{path $\dd{\pi_i}$ and $\dd{\pi}_{\!\!p(i)}$}\label{fig:eq_rho_i_a}
\end{subfigure}
\qquad\quad
	\begin{subfigure}{4.5cm}
\begin{overpic}[width=4.5cm,percent]{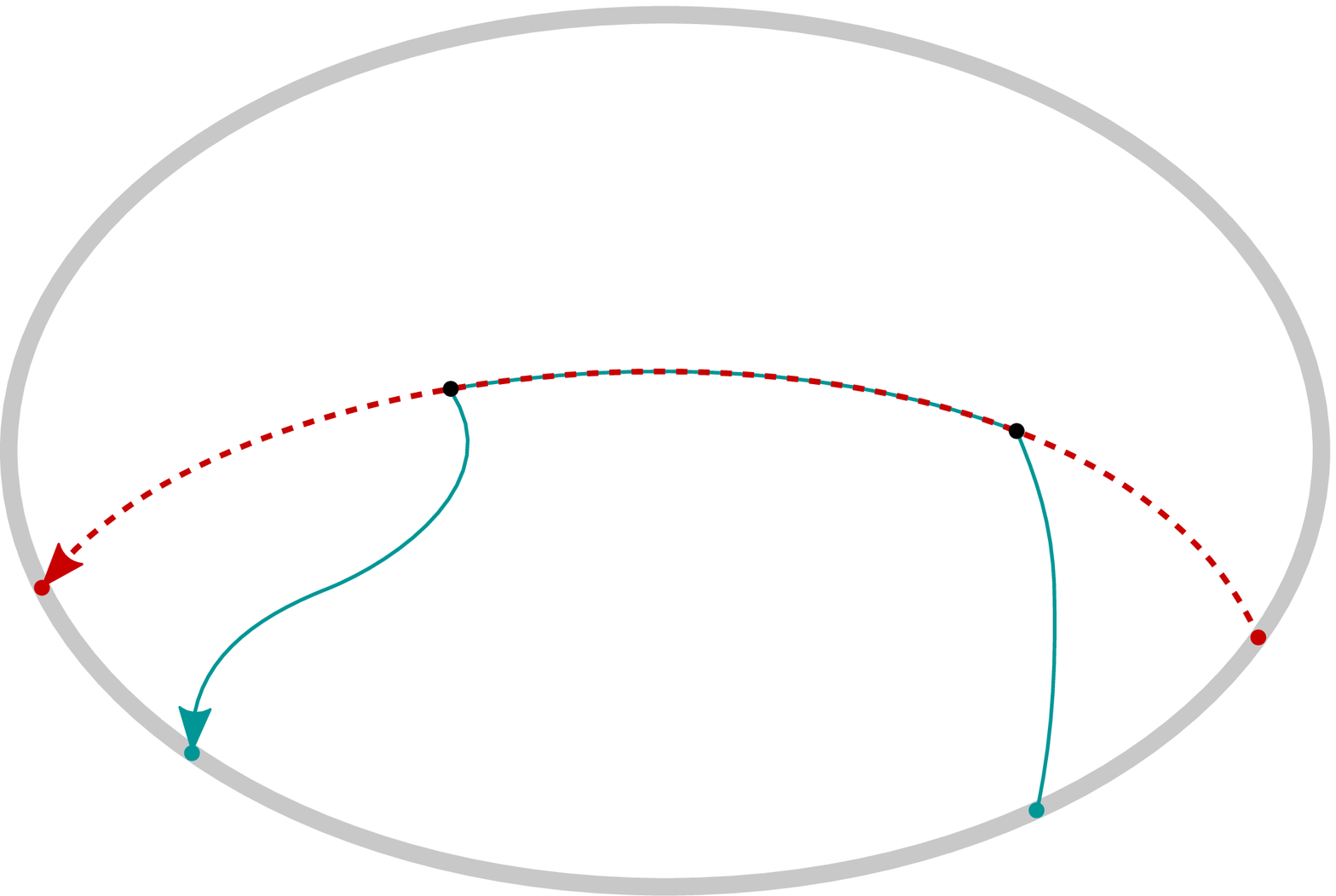}
\put(76,1){$s_i$}
\put(13,4){$t_i$}
\put(95,15.2){$s_{p(i)}$}
\put(-8,17.5){$t_{p(i)}$}

\put(29,22){$\rho_i$}

\end{overpic}
\caption{path $\rho_i$}\label{fig:eq_rho_i_b}
\end{subfigure}	
  \caption{proof of Theorem~\ref{th:MINNIE}, explanation of \eqref{eq:rho_i}.}
\label{fig:eq_rho_i}
\end{figure}

It is proved in~\cite{err_giappo} that, starting from the union of a set of shortest (not necessarily non-crossing) paths between well-formed terminal pairs,
distances between terminal pairs can be computed in linear time. Thus we can give the following main theorem.

\begin{theorem}\label{th:main}
Given an undirected unweighted plane graph $G$ and a set of well-formed terminal pairs $\{(s_i,t_i)\}$ on the external face $f^\infty$ of $G$ we can compute $U=\bigcup_{i\in[k]}p_i$ and the lengths of all $p_i$, for $i\in[k]$, where $p_i$ is a shortest $i$-path and $\{p_i\}_{i\in[k]}$ is a set of non-crossing 
paths, in $O(n)$ worst-case time.
\end{theorem}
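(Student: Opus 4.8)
The plan is to assemble the components already established in the preceding sections. First I would run algorithm \MINNIE on $G$ together with the given well-formed terminal pairs $\{(s_i,t_i)\}_{i\in[k]}$. By Theorem~\ref{th:MINNIE}, its output $Y_k$ equals $\bigcup_{i\in[k]}\rho_i$, where each $\rho_i$ is a shortest directed $i$-path and $\{\rho_i\}_{i\in[k]}$ is a set of non-crossing paths. Setting $p_i=\uuu{\rho_i}$ and $U=\uuu{Y_k}$ therefore produces exactly the desired union of a set of shortest non-crossing $i$-paths. The cost of this step is $O(n)$: verifying well-formedness and constructing the genealogy tree $T_G$ is $O(n)$ (a balanced-parenthesis scan, as noted in Section~\ref{section:genealogy_tree}), and the execution of \MINNIE---which internally calls \TOPOLINO---is $O(n)$ by Lemma~\ref{lemma:TOPOLINO_O(n)} together with the linear-time bound already proved for \MINNIE.

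It then remains to compute the lengths $|p_i|$ for all $i\in[k]$. For this I would invoke the procedure of Balzotti and Franciosa~\cite{err_giappo}, which takes as input the union of a set of shortest paths joining well-formed terminal pairs and returns the length of each such path in linear time. Since $U=\bigcup_{i\in[k]}p_i$ is precisely such a union (the constituent paths are shortest by the first part of Theorem~\ref{th:MINNIE}, and the pairs are well-formed by hypothesis), feeding $U$ to that procedure yields all the lengths $|p_i|$ in $O(n)$ worst-case time. Summing the two phases gives the claimed $O(n)$ bound.

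Because the statement is essentially a synthesis of the earlier results, no single step is a genuine obstacle; the only point requiring care is checking that the hypotheses of the length-computation routine of~\cite{err_giappo} are met. Concretely, that routine requires its input to be the edge union of shortest paths between well-formed pairs, and does \emph{not} by itself demand that the paths be non-crossing; Theorem~\ref{th:MINNIE} already guarantees both the shortest-path property and the required well-formed structure, so the hypotheses hold. A secondary point worth making explicit is that $U$ is obtained as the undirected image of the directed graph $Y_k$, so the same representation serves simultaneously as the returned union and as the input to the length routine, incurring no extra conversion cost and preserving the overall linear running time.
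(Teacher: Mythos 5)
Your proposal is correct and follows essentially the same route as the paper's own proof: obtain $U=\uuu{Y_k}$ from algorithm \MINNIE{} via Theorem~\ref{th:MINNIE}, charge the running time to the linear-time bounds for \MINNIE{} and \TOPOLINO, and compute the lengths with the procedure of~\cite{err_giappo}. Your added remarks (the balanced-parenthesis check for well-formedness and the observation that~\cite{err_giappo} does not require the paths to be non-crossing) are consistent with what the paper states and only make the argument more explicit.
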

\begin{proof}
By Theorem~\ref{th:MINNIE}, the required graph $U$ is the undirected version $\uuu{Y_k}$ of the graph computed by algorithm \MINNIE, that has $O(n)$ worst-case time complexity. Moreover, we compute the length of $p_i$, for all $i\in[k]$, in $O(n)$ worst-case time by using the results in~\cite{err_giappo}.\qed
\end{proof}

\begin{remark}
For graphs with small integer weights, we can obtain all the previous results in $O(n+L)$ worst-case time, where $L$ is the sum of all edge weights, by splitting an edge of weight $r$ in $r$ unweighted edges.
\end{remark}

\section{Conclusions}\label{sec:conclusions}
In this paper we have shown a linear time algorithm to compute the union of non-crossing shortest paths whose extremal vertices are in the external face of an undirected unweighted  planar graph.

The algorithm relies on the algorithm by Eisenstat and Klein for computing SSSP trees rooted on the vertices of the external face and on the novel concept of ISP subgraph of a planar graph, that can be of interest itself. 
The same approach cannot be extended to  weighted graphs,  because the algorithm of Eisenstat and Klein works only in the unweighted case.

As stated in \cite{erickson-nayyeri} our results may be applied in the case of terminal pairs lying on $h$ face boundaries.

We wish to investigate the non-crossing shortest paths problem when each terminal pair contains only one vertex on the external face.

\bibliographystyle{siam}
\bibliography{biblio.bib}

\end{document}